\let\tod\todo
\renewcommand\todo[1]{\tod[inline]{#1}}
\title{Conway Normal Form: Bridging Approaches for Comprehensive Formalization of Surreal Numbers}
\titlerunning{Conway Normal Form for the Formalization of Surreal Numbers }
\author{Karol P\k{a}k}{University of Bia\l{}ystok, Poland}{pakkarol@uwb.edu.pl}{https://orcid.org/0000-0002-7099-1669}{}
\author{Cezary Kaliszyk}{
  University of Melbourne, Australia  and University of Innsbruck, Austria
}{cezarykaliszyk@gmail.com}{https://orcid.org/0000-0002-8273-6059}{}
\authorrunning{K. P\k{a}k and C. Kaliszyk}
\keywords{Surreal numbers, Conway normal form, Mizar}
\lstdefinelanguage{Mizar}%
{columns=fullflexible,
keywords={scheme,schemes,environ,provided,where,ranks,
theorem,definition,radix,reserve,properties,struct,inhabited,expandable,attribute,
adjective,registration,coherence,defpred,cluster,from,sch,%
given,such,that,%
reflexivity, irreflexivity, symmetry, asymmetry, connectedness,and,attr,%
    antonym,existence,uniqueness,commutativity,idempotence,synonym,notation,%
    mode, means,func,pred, pred,equals,it,of,is,axiomatization,sethood,reconsider,redefine,%
    if,otherwise,proof,for,ex,being,holds,def,let,consider,take,not,contradiction,st,the,%
    be,thus,implies,assume,then,not,by,or,hence,thesis,end,iff,;,:,",",\#,as,qua},%
   sensitive=true,%
   basicstyle={{\linespread{1.}\usefont{T1}{lmss}{m}{n}}},%
 keywordstyle={\usefont{T1}{lmss}{sbc}{n}\selectfont},%
 keywordstyle=[2]{\it},%
  mathescape = true,%
   morecomment=[l][\texttt]{::},%
   literate={'}{{$\strut\mkern6mu\strut$}}1%
   {&}{{\usefont{T1}{lmss}{sbc}{n}\selectfont\texttt{\&}}}1%
   {-}{{\usefont{T1}{lmss}{m}{n}\selectfont\texttt{-}}}1%
   {->}{{$\rightarrow$}}1%
   {BrLeft}{{$\{$}}1%
   {\{\}}{{\{\!\}}}1
  }
\def\miz{\lstset{language=Mizar}\lstinline}
\def\mizV[#1]{\lstset{language=Mizar,keywords=[2]{#1}}\lstinline}
\DeclareMathAlphabet{\mathbbmsl}{U}{bbm}{m}{sl}
\newcommand\No[0]{{\mathbbmsl{N}\!o}}
\newcommand\On[0]{{\mathbbmsl{O}\!n}}
\newcommand{\Day}[2][]{\mbox{Day}_{#1}\,{#2}}
\newcommand{\born}[2][]{\mathfrak{b}
_{{#1}\,}{#2}}
\newcommand{\Uniq}[1]{\mathfrak{U}_{\scalebox{.7}{$\scriptstyle{}niq$}}{#1}}
\newcommand{\NoOrd}[1]{{\No^{\scalebox{.5}{$\scriptstyle{}\mkern-3mu\leq$}}{#1}}}
\newcommand{\NoOrdInd}[1]{{\No^{^{\scalebox{.5}{$\scriptstyle{}\mkern-3mu\leq$}}}{#1}}}
\newcommand{\Games}[1]{\mbox{Games}\,{#1}}
\newcommand{\Nomega}{\boldsymbol{\omega}}
\DeclareRobustCommand{\KPtriangle}{%
  \begingroup
 \setlength{\unitlength}{8pt}%
  \begin{picture}(1,1)
  \polyline(1,0)(0,0)(0,1)(1,0)
  \end{picture}%
  \endgroup
}
\begin{document}

\maketitle

\begin{abstract}
  The proper class of Conway's surreal numbers
  forms a rich totally ordered algebraically closed field
  with many arithmetic and algebraic properties close to those of
  real numbers, the ordinals, and infinitesimal numbers.
In this paper, we formalize the construction of Conway's
numbers in Mizar using two approaches and propose a bridge between them, aiming
to combine their advantages for efficient
formalization. By replacing transfinite induction-recursion with
transfinite induction, we streamline their construction. Additionally,
we introduce a method to merge proofs from both approaches using global choice,
facilitating formal proof.
We demonstrate that
surreal numbers form a field, including the square root, and that they encompass
subsets such as reals, ordinals, and powers of $\omega$.
We combined Conway's work with Ehrlich's generalization to formally prove Conway's
Normal Form, paving the way for many formal developments in surreal number theory.
\end{abstract}

\section{Introduction}\label{s:intro}

Surreal Numbers, developed by John Conway \cite{Conway}, are fascinating
for several reasons. Their construction relies on two intuitively
simple recursive definitions: as new numbers are built, an order
relation on the numbers is extended. They form a totally ordered
algebraically closed field, denoted by $\No$ that contains (up to
isomorphism) the reals, the ordinals, infinitesimal numbers, as well
as great ones, for example $\frac{1}{\omega}$, $\sqrt{\omega}$,
$\omega^{\omega^{\cdot^{\cdot^\omega}}}$. Despite their intuitively simple
definition, they form a proper class of numbers inductively while
simultaneously defining an order on the class recursively.
In the seventies, when they were first considered, mathematics was not
ready for such definitions, considering them unsure or even
unsafe. Even Conway \cite{Conway} wondered if such definitions were
meaningful and later referred to their construction as ``remarkable''.

Fortunately, the informal concept intrigued several mathematicians and
gave rise to work on the foundations of such concepts
\cite{Ehrlich2012,grimm2012,tondering}. Conway saw the possibility of
defining a model for surreal numbers in Neumann-Bernays-Gödel (NBG)
set theory with global choice. This has later been completed by
Ehrlich \cite{Ehrlich}.  There are also several more detailed proofs
of the properties of surreal numbers as a field; however, the more
formal ones only cover their simpler properties
\cite{alabdullah,grimm2012,tondering}.  Among these, Schleicher's work
\cite{schleicher} has been essential for our formalization.

There are two approaches to defining surreal numbers. The first
approach is closer to Conway's convention: it begins by considering the quotient
of surreal numbers with respect to the equivalence
relation
(defined by $x \le y \land y \le x$) and proceeds
to demonstrate that it forms a pre-order, employing switching between
representatives of the equivalence classes. On the other hand, the
second approach defines the class of surreal numbers using a
tree-theoretic definition
\cite{Ehrlich}. Both approaches have their advantages and
disadvantages, with the former allowing for the free selection of
equivalence class representatives, while the latter ensures the
uniqueness of elements. This means that fomalization of certain proofs
is more intricate and challenging in one approach compared to the other.

In this paper, we define a bridge that allows us to combine formal
proofs about surreal number in the general sense with the
tree-theoretic proofs. This allows us to efficiently formalize a large
number of properties of surreal numbers. In particular:
\begin{itemize}
\item We propose an easier approach to constructing the general surreal numbers where transfinite induction-recursion is replaced by transfinite induction.
\item We propose an approach to conveniently combine the general approach proofs with tree-theoretic approach using global choice.

\item We show that this is a convenient representation by proving that the surreals form a field including the square root.
\item We show that reals, ordinals, and powers of $\omega$ are subsets of the surreals. 
\item We combine the Conway Normal Form proof skeleton \cite{Conway}
with Erlich's generalization of Conway's theory of surreal numbers
\cite{Ehrlich} expressed in NBG set theory to prove it in Tarski-Groethendieck set theory~\cite{BrownPak19}. 
\item We present the details of the formalization in the Tarski-Grothendieck set theory formalized in the Mizar proof system. 
 The same approach can be used to work effectively with surreal numbers in systems that only support transfinite induction, e.g. Isabelle/ZF~\cite{isabellezf} and MetaMath~\cite{metamath}.
\item We proved a large number of surreal number properties needed for the above results. This amounts to
  335 proved top-level Mizar theorems totaling 1099 KB. The parts of the formalization
  corresponding to the proofs that surreals form a ring is already in the Mizar library
  \cite{SURREAL0.ABS,SURREALO.ABS,SURREALR.ABS}.
\end{itemize}
To our knowledge, this is the most in-depth formalization of surreal numbers today.

\section{Mizar}

The Mizar proof system operates within the framework of classical
first-order logic, augmented with limited second-order schemes
requiring explicit instantiation by users~\cite{mizar,Grabowski2015FDM}.  Within that logic, Mizar
introduces the axioms of Tarski-Grothendieck set theory~\cite{MML2017}, which extends
ZFC by incorporating Tarski's Axiom A. This axiom implies the Axiom of
Choice (AC)~\cite{BrownPak19} and enables the existence of arbitrarily large strongly
inaccessible cardinals, thus providing models of ZFC and circumventing
the necessity for proper classes in certain formalizations.

Unlike traditional foundational type systems, Mizar treats types as
first-order predicates, supplemented with automation implemented
through user-programmable Horn clauses. These clauses facilitate the
propagation of various properties, known as adjectives, throughout
Mizar terms in a bottom-up fashion \cite{JAR2018}. The fact that an object
\mizV[x]{X} satisfies the type predicate \mizV[t]{t} is written \miz{x be t}.
Mizar adopts a Jaśkowski-style natural deduction approach,
complemented by a fast and type-aware refutational first-order prover
with several extensions \cite{DBLP:conf/itp/GrabowskiK23}
known as the Mizar obvious inference \miz{by}.

 The Mizar system is
accompanied by the Mizar Mathematical Library (MML), a large corpus of
formal mathematics, that among many other topics includes
formalizations of reals and ordinals that we will use in this work.

Mizar allows the definition of several kinds of meta-level objects. We
briefly revisit the definition of meta-level functions by \miz{means},
as we will utilize them several times throughout the paper. This
mechanism enables the definition of a function based on a
predicate. The predicate can then reference the object being defined
using the special keyword \miz{it} within the definition body. This
methodology closely resembles defining functions using the choice
operator found in other proof systems. However, unlike the direct use
of the choice operator, function definitions by \miz{means} in Mizar permit
the specification of additional conditions and require the explicit
declaration of the result type.
Two key proof obligations accompany such definitions: the proof of
existence and the proof of uniqueness of the function's result. We
illustrate the syntax of function definitions by \miz{means} in Mizar
through an example.
For two sets \mizV[X]{X} and \mizV[Y]{Y} of type \miz{set}, the MML
defines a meta-level function (keyword \miz{func}) union, denoted as
\mizV[X,Y]{X'$\cup$'Y}. This function returns a set (return type
indicated after the \miz{$\to$} keyword).
The semantics of the function
(following the \miz{means} keyword) are given by the predicate
stating that elements belong to the union if they are in any of its
arguments. While the Mizar input syntax for universal and existential
quantifiers is \miz{for} and \miz{ex}, respectively, we present
them using more standard quantifier symbols in this paper:

\begin{Mizar}{X,Y,x}
''let'X,'Y'be'set;
''func'X'$\cup$'Y'$\to$'set'means
''''$\forall$x'be'set.'x'$\in$'it'$\Leftrightarrow$'(x'$\in$'X'$\lor$'x'$\in$'Y);
\end{Mizar}
Each meta-function definition  requires 
showing that the defined object exists
and is unique. For the details of these proofs see the formalization.

Meta-level predicates and types (as mentioned above types are just predicates) are defined
in an analogous way, with the only exception that the keyword \miz{func} is replaced by
\miz{pred} and \miz{attr} respectively.

One of the restrictions we will consider in this paper is that, in Mizar (as well as in similar systems based on set theory),
everything is considered to be a set, in accordance with Tarski's first axiom.
Additionally, any set is an element of a set in the von Neumann hierarchy of sets.
However, this does not imply that reasoning about certain classes is impossible,
as meta-level functions and predicates (in Mizar also attributes/types) are not sets.
This means that, according to the grammar, \mizV[x]{x'is'set} can only be written when \mizV[x]{x} is a term.
Consequently, we can define a type such as \miz{Ordinal} even if there is no set of ordinals.
Similarly, we can define meta-level functions on a type whose elements form a proper class,
for example, a successor function of the type \miz{Ordinal} $\to$ \miz{Ordinal}.
In Mizar (as in Isabelle/ZF or Megalodon), we can even quantify over such meta-level functions
and predicates using second-order logic, but only with the universal quantifier.

\section{Introduction to Surreal Numbers}\label{s:surreal}
\newcommand\restr[2]{{\left.\kern-\nulldelimiterspace {#1}
  {\mathchoice{\vphantom{\big|}}{}{}{}}\right|_{#2}
  }}
Conway introduced the surreal numbers using two interleaving definitions: The way to build a new surreal number relies on two sets
of surreal numbers, for which appropriate ordering constraints hold. And the way to check if two numbers are related in the order relies
on checking the relation for the underlying sets of numbers. More precisely:
\begin{itemize}
\item[]$\strut\mkern-27mu${\bf Concept:} If $L$, $R$ are any two sets of numbers, and no member of $R$ is $\leq$ any member
of $L$, then there is a number $\{ L \mid R \}$. All numbers are constructed in this
way.
\item[]$\strut\mkern-26mu${\bf Comparison:}
If $x=\{ L\mid R\}$, $x^\prime=\{ L^\prime \mid R^\prime\}$, then $x \leq x^\prime$ if and only if
$x^\prime \not\leq$ any member of $L$
and no member of $R$ is $\leq x^\prime$.
\end{itemize}

We introduce several notations that allow describing the various properties and proofs more concisely.
We write $L \ll R$ iff for each $x \in L$ and $y \in R$, $y \not\leq x$.
We introduce the relation $x \thickapprox y$ to denote $x \le y \land y \le x$. It is easy to see that it is an equivalence relation.
Note, that several works quoted in the introduction Sec.~\ref{s:intro} use the same symbol for identity $=$ and equivalence $\thickapprox$ of
surreal numbers. As we aim to formalize these using interactive proof systems we will precisely separate the two.
 We also follow Conway's original notation
 $\{L\mid R\}$ for pairs $\langle L,R\rangle$. Let $x=\{ L\mid R\}$ a surreal number, we use $L_x$ and $R_x$ to refer
 to the left $L$ and the right component $R$ of $x$, respectively.


Conway constructs surreal numbers in so-called days indexed by ordinals. He starts by defining the first number, denoted ${0_\No }$, as the pair $\{\,\mid\,\}$ ($\coloneqq \langle \emptyset,\emptyset\rangle$). Note that the relation $\emptyset \ll \emptyset$ obviously holds. This number is then used to initialize day zero as the only number present at this stage.
%
In the next iteration, $\Day{1}$, we could consider three more pairs
$-1_\No\coloneqq\{\,\mid\! {0_\No} \}$, ${1_\No}\coloneqq\{{0_\No}\mid\,\}$, and $\{{0_\No}\mid {0_\No}\}$. The last one of those is not a number since already in $\Day{0}$ we can prove that
${0_\No}\leq {0_\No}$, and relations between numbers are preserved across days.
Generally, $\Day{\alpha}$ is defined by all  surreal numbers $x$ for which $L_x,R_x\subseteq \bigcup_{\beta<\alpha} \Day{\beta}$ and $L_x\ll R_x$.
Additionally, we introduce the concept of the birthday of a surreal number $x$, denoted by $\born{x}$, i.e,
the smallest ordinal $\alpha$ such that $x \in \Day{\alpha}$.

To show the main differences between Conway's approach and the tree-theoretic one, that we address in our contribution, we present the construction of  $\Day{2}$.
When generating the numbers present in $\Day{2}$, we can place any of the numbers already in $\Day{1}$ (i.e., $-1_\No$, $0_\No$, $1_\No$)
in the left and in the right set. This gives $(2^3)^2 = 64$ candidates for new numbers $x$.
Only 20 of these numbers satisfy the criterion $L_x\ll R_x$.
Note, that checking this criterion requires knowing the ordering on all the numbers in the preceding $\Day{1}$.
In general, this number grows exponentially, that is given $n$ different numbers, there are
$(n+2)2^{n-1}$ new ones that satisfy the comparison criterion.
 These 20 numbers are different, however, not all are different in the quotient $\thickapprox$. There are, in fact, only 4 new numbers, namely
 $-2_\No$, $-\frac{1}{2}_\No$,
 $\frac{1}{2}_\No$, $2_\No$ (see Fig.~\ref{fig:day1_4}).
 This is because some of the newly generated numbers are equivalent in the $\thickapprox$ sense to each other,
 e.g., $\frac{1}{2}_\No\coloneqq\{{0_\No}\,\mid\! {1_\No} \} \thickapprox \{ {-1_\No},{0_\No}\,\mid\! {1_\No}\}$,
 and some are equivalent to already existing ones, e.g.,
 $0_\No=\{\,\mid\, \}
\thickapprox \{ {-1_\No}\mid\,\}
\thickapprox \{\,\mid {1_\No}\}
\thickapprox \{{-1_\No}\mid {1_\No}\}$.
More generally, in $\Day{n}$ for any $n\in\mathbb{N}$ there are $2n$ new numbers.

We next need to characterize the comparison relation on the new numbers. To do this for the new numbers
in $\Day{\alpha}$, it is not sufficient to directly (without recursion) use the order for the previous
days $\Day{\beta}$ for $\beta<\alpha$. As the tree has depth $\alpha$, we need to
perform up to $\alpha$ steps of recursion in order to use the information from previous days.
Indeed, to justify that  $\{{-1_\No}\mid {1_\No}\} \leq \{{0_\No}\,\mid\! {1_\No} \}$ (compare with $0_\No \leq \frac{1}{2}_\No$)
we need to show ${-1_\No}\leq \{{0_\No}\,\mid\! {1_\No} \}\wedge \{{-1_\No}\mid {1_\No}\}\leq 1_\No$, and that happens because
$-1_\No \leq 1_\No$ which we know from $\Day{1}$.
 \begin{figure}
\begin{center}
\newcommand{\Blarg}{\fontsize{6.8pt}{7.3pt}\selectfont{}}
\newcommand{\Mlarg}{\fontsize{6.5pt}{7.2pt}\selectfont{}}
\newcommand{\Bnorm}{\fontsize{6.4pt}{7.1pt}\selectfont{}}
\newcommand{\Mnorm}{\fontsize{5.4pt}{6.0pt}\selectfont{}}
\newcommand{\Bsmall}{\fontsize{5.4pt}{6.0pt}\selectfont{}}
\newcommand{\Msmall}{\fontsize{4.0pt}{4.8pt}\selectfont{}}
\begin{tikzpicture}
   \matrix[%
        column sep={24pt,between origins},
        row sep={14pt,between origins},
        punkt/.style={rectangle, rounded corners, draw,line width=0.1pt, minimum size=41pt,rectangle split,
            rectangle  split  parts=2
            }] (before)
   { & &&&&&&
   \node[punkt](A0){{ $0_\No$}\nodepart{second}$\{\;\;\mid\;\;\}$};
& &&&&&&\\ \\
&&&\node[punkt](AM1){\Blarg{ $-1_\No$}\nodepart{second}\Mlarg$\{ \;\; \mid 0_\No \}$};
&&&&&&&&
\node[punkt](A1){\Blarg { $1_\No$}\nodepart{second}\Mlarg$\{0_\No \mid \;\; \}$};
\\
\\
\\
   &
\node[punkt](AM2){\Bnorm$\strut-2_\No$\nodepart{second}\Mnorm$\{\;\;\strut\mid\! {-1_\No} \}$};&&&&
\node[punkt](AM12){\Bnorm$-\frac{1}{2}_\No$\nodepart{second}\Mnorm$\{{-1_\No}\,\mid\! {0_\No} \}$}; &&&&
\node[punkt](A12){\Bnorm$\frac{1}{2}_\No$\nodepart{second} \Mnorm$\{{0_\No}\,\mid\! {1_\No} \}$}; &&&&
\node[punkt](A2){\Bnorm$\strut2_\No$\nodepart{second}\Mnorm$\{{1_\No}\!\mid \;\; \}$};&
\\
\\
\\
\node[punkt](AM3){\Bsmall$\strut-3_\No$\nodepart{second}\Msmall$\{\;\;\strut\mid\! {\strut-2_\No} \}$};&&
\node[punkt](AM32){\Bsmall$\strut-\frac{3}{2}_\No$\nodepart{second} \Msmall$\{{-\!2_\No}\!\mid\! {\strut-1_\No} \}$}; &&
\node[punkt](AM34){\Bsmall$\strut-\frac{3}{4}_\No$\nodepart{second} \Msmall$\{{-\!1_\No}\!\mid\! {\strut\!-\frac{1}{2}_\No} \}$}; &&
\node[punkt](AM14){\Bsmall$\strut-\frac{1}{4}_\No$\nodepart{second}\Msmall$\{\strut{-\!\frac{1}{2}_\No}\!\mid \! {0_\No} \}$};&&
\node[punkt](A14){\Bsmall$\strut\frac{1}{4}_\No$\nodepart{second}\Msmall$\{ \strut{0_\No}  \!\mid \! {\frac{1}{2}_\No} \}$};&&
\node[punkt](A34){\Bsmall$\strut\frac{3}{4}_\No$\nodepart{second}\Msmall $\{\strut{\frac{1}{2}_\No}\!\mid\! {1_\No} \}$};&&
\node[punkt](A32){\Bsmall$\strut\frac{3}{2}_\No$\nodepart{second}\Msmall $\{\strut{1_\No}\!\mid\! {2_\No} \}$}; &&
\node[punkt](A3){\Bsmall$\strut3_\No$\nodepart{second}\Msmall $\{\strut{2_\No}\!\mid\;\; \}$};\\
   };
\draw(A0)--(AM1);
\draw(A0)--(A1);
\draw(AM1)--(AM2);\draw(AM1)--(AM12);
\draw(AM2)--(AM3);\draw(AM2)--(AM32);\draw(AM12)--(AM34);\draw(AM12)--(AM14);
\draw(A1)--(A2);\draw(A1)--(A12);
\draw(A2)--(A32);\draw(A2)--(A3);
\draw(A12)--(A34);\draw(A12)--(A14);

\draw(AM34)--($(AM34)+(-0.5,-0.8)$);\draw(AM34)--($(AM34)+(0.5,-0.8)$);
\draw(AM14)--($(AM14)+(-0.5,-0.8)$);\draw(AM14)--($(AM14)+(0.5,-0.8)$);
\draw(AM32)--($(AM32)+(-0.5,-0.8)$);\draw(AM32)--($(AM32)+(0.5,-0.8)$);
\draw(AM3)--($(AM3)+(-0.5,-0.8)$); \draw(AM3)--($(AM3)+(0.5,-0.8)$);

\draw(A34)--($(A34)+(-0.5,-0.8)$);\draw(A34)--($(A34)+(0.5,-0.8)$);
\draw(A14)--($(A14)+(-0.5,-0.8)$);\draw(A14)--($(A14)+(0.5,-0.8)$);
\draw(A32)--($(A32)+(-0.5,-0.8)$);\draw(A32)--($(A32)+(0.5,-0.8)$);
\draw(A3)--($(A3)+(-0.5,-0.8)$); \draw(A3)--($(A3)+(0.5,-0.8)$);

\node at ($0.5*(A0)+0.5*(AM1)+(0,0.2)$) {\pmb{-}};
\node at ($0.5*(A0)+0.5*(A1)+(0,0.2)$) {\pmb{+}};

\node at ($0.5*(AM1)+0.5*(AM2)+(-0.3,0.1)$) {\pmb{-}};
\node at ($0.5*(AM1)+0.5*(AM12)+(0.3,0.1)$) {\pmb{+}};
\node at ($0.5*(A1)+0.5*(A2)+(0.3,0.1)$) {\pmb{+}};
\node at ($0.5*(A1)+0.5*(A12)+(-0.3,0.1)$) {\pmb{-}};

\node at ($0.5*(AM2)+0.5*(AM3)+(-0.3,0)$) {\pmb{-}};
\node at ($0.5*(AM2)+0.5*(AM32)+(0.3,0)$) {\pmb{+}};
\node at ($0.5*(AM12)+0.5*(AM34)+(-0.3,0)$) {\pmb{-}};
\node at ($0.5*(AM12)+0.5*(AM14)+(0.3,0)$) {\pmb{+}};
\node at ($0.5*(A12)+0.5*(A14)+(-0.3,0)$) {\pmb{-}};
\node at ($0.5*(A12)+0.5*(A34)+(0.3,0)$) {\pmb{+}};

\node at ($0.5*(A2)+0.5*(A32)+(-0.3,0)$) {\pmb{-}};
\node at ($0.5*(A2)+0.5*(A3)+(0.3,0)$) {\pmb{+}};

\node at ($(AM3)+(-0.65,-0.7)$) {{\small\pmb{-}}};
\node at ($(AM3)+(0.7,-0.7)$) {{\small\pmb{+}}};
\node at ($(AM32)+(-0.65,-0.7)$) {{\small\pmb{-}}};
\node at ($(AM32)+(0.7,-0.7)$) {{\small\pmb{+}}};
\node at ($(AM34)+(-0.65,-0.7)$) {{\small\pmb{-}}};
\node at ($(AM34)+(0.7,-0.7)$) {{\small\pmb{+}}};
\node at ($(AM14)+(-0.65,-0.7)$) {{\small\pmb{-}}};
\node at ($(AM14)+(0.7,-0.7)$) {{\small\pmb{+}}};
\node at ($(A14)+(-0.65,-0.7)$) {{\small\pmb{-}}};
\node at ($(A14)+(0.7,-0.7)$) {{\small\pmb{+}}};
\node at ($(A34)+(-0.65,-0.7)$) {{\small\pmb{-}}};
\node at ($(A34)+(0.7,-0.7)$) {{\small\pmb{+}}};
\node at ($(A32)+(-0.65,-0.7)$) {{\small\pmb{-}}};
\node at ($(A32)+(0.7,-0.7)$) {{\small\pmb{+}}};
\node at ($(A3)+(-0.65,-0.7)$) {{\small\pmb{-}}};
\node at ($(A3)+(0.7,-0.7)$) {{\small\pmb{+}}};
   \end{tikzpicture}
\end{center}
\caption{The relations between the numbers created in the first four days. The bottom part of each node give a representative of the equivalence class w.r.t. the $\thickapprox$ relation.
  The edge labels \pmb{+} and \pmb{-} correspond to the tree-theoretic interpretation of surreal numbers.}
  \label{fig:day1_4}
\end{figure}
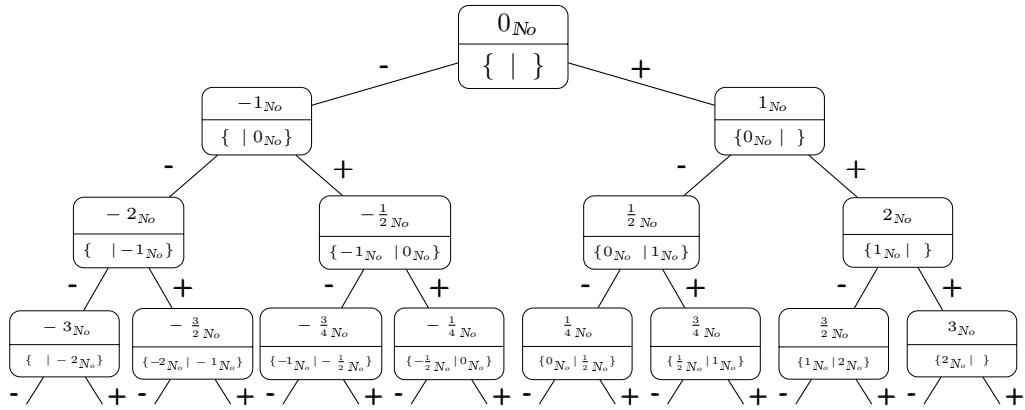

For any surreal number $x$ in Fig.~\ref{fig:day1_4}, we can always construct two new
numbers in the $\thickapprox$ sense using $x_1=\{L_x \cup \{x\} \mid R_x\}$ and
$x_2=\{L_x \mid R_x \cup \{x\}\}$ to represent them. Naturally,
if $x$ is youngest w.r.t. $\thickapprox$, then $\born{x_1}=\born{x_2}=1\!+\!\born{x}$.
Note, that the new numbers in $\Day{\alpha}$, where $\alpha$ is a limit ordinal,
cannot have a direct predecessor. They are created as cuts, similar to Dedekind reals.
Nevertheless,
if $\born{x}$ is not a limit ordinal and $x$ is youngest w.r.t. $\thickapprox$,
we can construct $y$ that corresponds to the direct predecessor of $x$ in the $\thickapprox$ sense, for which
$x\thickapprox\{L_y \cup \{y\} \mid R_y\}$ or
$x\thickapprox\{L_y \mid R_y \cup \{y\}\}$.
This allows interpreting the
equivalence classes of the $\thickapprox$ relation, as the class of all
possible maps from an ordinal (including limit ordinals) to the set $\{\pmb{+},\pmb{-}\}$, where \pmb{+} and $\pmb{-}$ correspond to these two
alternatives.
This is the foundation of the tree-theoretic approach.
As ordinals can be thought of as sequences, we can use the standard lexicographic order, with
$\pmb{-}\prec\mbox{\emph{undefined}}\prec \pmb{+}$. As the sequences are of different
length for different ordinals, the \emph{undefined} come up outside of the domain of the maps:
If $x$ is a subsequence of length $\alpha$ of the sequence $y$ then the first index where they
differ is $\alpha+1$. At that index $x(\alpha+1)$ is undefined while $y(\alpha+1)$ has a value.

In the tree-theoretic approach, the comparison is defined as below. We will
not analyse it in our work, but it helps compare the approaches.
\begin{definition}[tree-theoretic comparison]\label{lex}
Let $x,y$ be maps from an ordinal to $\{\pmb{+},\pmb{-}\}$,
that represent surreal numbers in the tree-theoretic approach.
 Suppose that $x \neq y$ and $\alpha$ is  the smallest ordinal where $x(\alpha) \neq y(\alpha)$.
 Then $x< y$ if and only if
\begin{equation}
\small
\left(x(\alpha)=\pmb{-}\wedge y(\alpha)\mbox{ is \emph{undefined}}\right)\vee
\left(x(\alpha)=\pmb{-}\wedge y(\alpha)=\pmb{+}\right)\vee
\left(x(\alpha)\mbox{ is \emph{undefined}}\wedge y(\alpha)=\pmb{+}\right).
\end{equation}
\end{definition}
The construction of surreal numbers in this approach is significantly easier to formalize \cite{CGAMES_1.ABS}.
We can even define the negation operator $-x$ by exchanging \pmb{+} and \pmb{-} in any map $x$.
However, the construction of the remaining field operations becomes much more involved. In order to define $x\star y$, it is
needed to build a kind of bridge to the Conway approach. This starts with some representation $\overline{x},\overline{y}$, for which
we define $\overline{x}\star\overline{y}$ using the Conway method. Finally, one needs to show the existence of a map for $x\star y$, rather than
use direct recursion. A similar approach is required to prove all the field properties. This is much more involved than
in the Conway approach, where we can freely represent numbers by their different representatives in their $\thickapprox$ class.

\section{Formal Set-theoretic Construction of Surreal Numbers}\label{s:constr}
In the previous Section,
we pointed out that
the surreal numbers are not a set and
their ordering relation $\leq$
cannot be a set.
However, the restriction of this relation
to any particular day is a set.
To work with such sets, we will index the order $\alpha$ using a relation $Ord$ that is a set.
The notation $x \leq_{Ord} y$ simply means that
$\langle x,y\rangle \in {Ord}$ and $L \ll_{Ord} R$ means $\forall\,l\!\in\!L.\:\forall\,r\!\in\!R.\: \langle r,l\rangle \not\in Ord$.
Remember, that constructing the surreal numbers in $\Day{2}$ proceeded in two steps: First the candidates were selected using the ordering on $\Day{1}$ surreals; subsequently the order in $\Day{2}$ was computed. In the construction, as well as in the uniqueness proof, we need to modify
the  ${Ord}$ relations with a given set of candidates to construct $\Day{\alpha}$. For this, we define the sets of pairs
 $\Games{\alpha}$ for any ordinal $\alpha$ as follows ($\mathcal{P}$ stands for powerset):
\begin{equation}
\Games{\alpha} = \mathcal{P}\left(\displaystyle\bigcup_{\beta<\alpha} \Games{\beta}\right)\times
\mathcal{P}\left(\displaystyle\bigcup_{\beta<\alpha} \Games{\beta}\right).
\end{equation}
Clearly,  $\Games{0} = \{ \emptyset \} \times \{ \emptyset \} = \{ \langle \emptyset,\emptyset\rangle \}$
and $\Day{\alpha} \subseteq \Games{\alpha}$.
Now we can define $\Day[Ord]{\alpha}$
even if $Ord$  does not satisfy the {\it Comparison} condition as follows:
\begin{equation}
\Day[Ord]{\alpha} = \{ x \in \Games{\alpha}\mid L_x \subseteq \displaystyle\bigcup_{\beta<\alpha} \Day[Ord]{\beta} \wedge
  R_x \subseteq\displaystyle\bigcup_{\beta<\alpha} \Day[Ord]{\beta} \wedge L_x \ll_{Ord} R_x\}.
\end{equation}
The recursive definition relies on a very complicated recursion scheme that combines unions over all previous ordinals.
As this is very hard to express in several systems, including Mizar, we use a helper sequence ${S}$.
\begin{definition}[{$\Day[Ord]{\alpha}$}]\label{def:dayord}
Let $Ord$ be relation, $\alpha$ be an ordinal and a $\alpha$-length sequence $S$ that satisfies:
\begin{equation}
 S(\beta) = \{ x \in \Games{\beta}\mid L_x \subseteq \displaystyle\bigcup_{\gamma<\beta} S(\gamma) \wedge
  R_x \subseteq \displaystyle\bigcup_{\gamma<\beta} S(\gamma) \wedge L_x \ll_{Ord} R_x\}
\end{equation}
for any ordinal $\beta\leq \alpha$. Then $\Day[Ord]{\alpha}=S(\alpha)$.
\end{definition}

We give the formal definition of $\Day[Ord]{\alpha}$ in Mizar and explain several used concepts below:

\begin{Mizar}{A,B,S,x,R,Ord}
'let'$\alpha$'be'Ordinal,'Ord'be'Relation;
'func'Day(Ord,$\alpha$)'->'Subset'of'Games'$\alpha$'means
'''$\exists\,$S'be'Sequence.'it'='S.$\alpha$'$\land$'dom'S'='succ $\alpha$'$\land$'$(\forall\,\beta$'be'Ordinal.'$\beta$'$\in$'succ'$\alpha$'$\Rightarrow$
'''''S.$\beta$'='{x'where'x'is'Element'of'Games'$\beta$:'L$_x$'$\subseteq$'union'rng'($\restr{S}{\beta}$)'$\land$'R$_x$'$\subseteq$'union'rng'($\restr{S}{\beta}$)'$\land$'L$_x$'$\ll_{Ord}$'R$_x$});
\end{Mizar}
\begin{itemize}
\item The length of \mizV[S]{S} is $\alpha$ (equivalently \mizV[S]{dom'S'='succ'$\alpha$}),
\item \mizV[S]{S.$\beta$} is the set-theoretic function application, corresponding to $\Day[Ord]{\beta}$
for \mizV[S]{$\beta$'$\in$'succ'$\alpha$} ($\beta < \alpha$).
\item \mizV[S]{union'rng'($\restr{S}{\beta}$)} is the union of the values of the sequence \mizV[S]{S} restricted to the ordinal $\beta$
which corresponds to \mizV[Ord]{$\bigcup_{\gamma<\beta} \Day[Ord]{\gamma}$},
\item $it$ refers to the defined object, equal to \mizV[S]{S.$\alpha$} which also is a subset of \mizV[]{Games'$\alpha$}.
\end{itemize}

The definition implies that $\Day[Ord]\alpha\subseteq \Day[Ord]\beta$ if $\alpha\leq \beta$,
but it is also possible to use transfinite induction to show a ``monotonicity''-like property:

\begin{lemma}\label{lem:mono}
Let $Ord$ be a relation, $\alpha$ be an ordinal and $x\in\Games{\alpha}$
such that
$x\not\in \Day[Ord]{\alpha}$. Then for all ordinals $\beta$, $x\not\in \Day[Ord]{\beta}$.
\end{lemma}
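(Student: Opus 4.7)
The plan is to prove the contrapositive by transfinite induction on $\beta$: for every ordinal $\beta$, every ordinal $\alpha$, and every $x\in\Games{\alpha}$, if $x\in\Day[Ord]{\beta}$ then $x\in\Day[Ord]{\alpha}$. Once established, the lemma will follow by instantiating with the given $\alpha$ and $x$ and contraposing in $\beta$. Note that the monotonicity remark in the text already handles the case $\beta\le\alpha$, but the induction below covers all $\beta$ uniformly and so subsumes it.

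In the inductive step at $\beta$, I fix arbitrary $\alpha$ and $x\in\Games{\alpha}$ with $x\in\Day[Ord]{\beta}$ and unfold the latter via Definition~\ref{def:dayord} to extract $L_x,R_x\subseteq\bigcup_{\beta'<\beta}\Day[Ord]{\beta'}$ together with $L_x\ll_{Ord} R_x$. The ordering condition depends only on $x$ and $Ord$, so it transfers to $\Day[Ord]{\alpha}$ unchanged. To show $L_x\subseteq\bigcup_{\alpha'<\alpha}\Day[Ord]{\alpha'}$, I take an arbitrary $l\in L_x$: from $x\in\Games{\alpha}$ we get $l\in\Games{\alpha_l}$ for some $\alpha_l<\alpha$, and from $l\in\bigcup_{\beta'<\beta}\Day[Ord]{\beta'}$ we get $l\in\Day[Ord]{\beta_l}$ for some $\beta_l<\beta$. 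Applying the inductive hypothesis to $(\beta_l,\alpha_l,l)$ yields $l\in\Day[Ord]{\alpha_l}$, hence the required containment. The argument for $R_x$ is symmetric, so the three defining conditions of $\Day[Ord]{\alpha}$ are all met and $x\in\Day[Ord]{\alpha}$.

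The mathematical content is short; I expect the main obstacle to be the Mizar bookkeeping around the helper sequence $S$ from Definition~\ref{def:dayord}: each use of a hypothesis of the form $y\in\Day[Ord]{\gamma}$ and each conclusion of the same form requires producing or consuming a suitable sequence witness and then reading off the three conjuncts from the defining predicate. A related subtlety is that the induction statement must be phrased with both $\alpha$ and $x$ universally quantified, so that when descending from $x$ to a component $l\in L_x$ one is free to pair $l$ with its own strictly smaller ordinal $\alpha_l$; a naive induction that fixes $\alpha$ and $x$ at the outset would fail at precisely this step.
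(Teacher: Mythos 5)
The paper offers no explicit proof for this lemma, saying only that it is shown by transfinite induction, and your argument is a correct realization of that claim. You prove the contrapositive $\bigl(\forall\beta.\,\forall\alpha.\,\forall x\in\Games{\alpha}.\;x\in\Day[Ord]{\beta}\Rightarrow x\in\Day[Ord]{\alpha}\bigr)$ by transfinite induction on $\beta$, and you correctly identify the crucial point: $\alpha$ and $x$ must be quantified \emph{inside} the induction so that the hypothesis can be applied to a component $l\in L_x\cup R_x$ paired with its own $\Games$-rank $\alpha_l<\alpha$ and its own $\Day$-rank $\beta_l<\beta$. A dual formulation (direct, with $\in$-induction on $x$ via its $\Games$-rank $\alpha$ and $\beta$ quantified inside) would work equally well, but your choice is fine and the remarks on the Mizar bookkeeping around the witness sequence of Definition~\ref{def:dayord} are apt.
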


We also restrict the concept of birthday of a surreal number $x$ to a relation $Ord$.

\begin{definition}
Let $Ord$ be relation, $x$ be element of $\Day[Ord]{\beta}$ for some ordinal $\beta$.
Then the birthday of a object $x$ with respect to a relation $Ord$
is an ordinal $\alpha$ that satisfies two conditions:
\begin{itemize}
\item $o \in \Day[Ord] \alpha$,
\item $\forall\,\beta.\:\left( x \in \Day[Ord] \beta \rightarrow \alpha \leq \beta\right)$.
\end{itemize}

\begin{Mizar}{A,B,S,x,R,Ord}
'assume'$\exists\,\beta$'be'Ordinal.'x'$\in$ Day(Ord,$\beta$);
'func'born(Ord,x)'->'Ordinal'means
'''x'$\in$'Day(Ord,it)'$\wedge$'$(\forall\,\beta$'be'Ordinal.'x'$\in$'Day(Ord,$\beta$)'$\Rightarrow$'it'$\subseteq$'$\beta$);
\end{Mizar}
\end{definition}

Even if we construct the  $Ord$ relation that satisfies the {\it Comparison} condition, we cannot use that single
relation for all days. Indeed, each day $\Day[Ord]{\alpha}$ corresponds to at least one new number
$\{\bigcup_{\beta<\alpha} \Day[Ord]{\beta}\mid\,\}$ and $\langle{0_\No},\{\bigcup_{\beta<\alpha} \Day[Ord]{\beta}\mid\,\}\rangle\in Ord$.
As such, we can only assume that  $Ord$ satisfies the {\it Comparison} condition on some set.
At this point, we could already talk about the surreal numbers, but only until a certain birthday,
after which we would have to modify the relation.

Conway, uses a special induction over $n$-tuples of arguments (referred to as Conway's induction)
when constructing the surreal numbers, their order, the operations, as well as when proving their properties.
The induction is similar to $\in$-induction, where the truth of $P(x_1,x_2,\ldots,x_n)$ follows
from the truth of $P$ for all modified tuples $x_1,x_2,\ldots,x_n$ where at least one
$x_i$ is replaced by its left or right component.
Unfortunately, some proofs require changing the order of the elements in a sequence with additional properties.
This has already been observed by Mamane \cite{Mamane04}: In his Coq formalization he refers to such induction arguments as {\it permuting inductions}.
Such an induction could be expressed as a transfinite induction over sums of $\born{}$ applied to these arguments,
however, standard ordinal sum is not symmetric nor monotonous on its both arguments.

In our formalization, we tackle this problem by using the natural Hessenberg sum of ordinals.
This is a variant of ordinal sum that is symmetric and monotonic on both arguments.
In fact, Hessenberg invented his ordinal sum, inspired by surreal numbers and the use of Cantor Normal Form\footnote{
Cantor Normal Form is the unique representation of any ordinal $x$ as \(n_1\omega^{\alpha_1}+n_2\omega^{\alpha_2}+\ldots+n_k\omega^{\alpha_k}\)
for some $k\in\mathbb{N}$, where $\{n_i\}_{i=0}^k$ is a sequence of positive naturals
and $\{\alpha_i\}_{i=0}^k$ a decreasing ordinal sequence.
},
but it became a concept in mathematics in general and has already been formalized \cite{ORDINAL7.ABS}.
In our work, wherever possible, we use subsets of Cartesian product in the construction,
but full Hessenberg sum is necessary for example in the definitions of the arithmetic operation.

\newcommand{\CProd}[1]{\mbox{\texttt{Prod}}^{\,C}_{\,#1}}
\newcommand{\OProd}[1]{\mbox{\texttt{Prod}}^{\,O}_{\,#1}}
\begin{definition}[$\CProd{}$ and $\OProd{}$]\label{defCOprod}
Let  $Ord$ be a relation and $\alpha$, $\beta$ be ordinals. Then we define two subsets of the Cartesian product
$\Day[Ord]{\alpha}\times\Day[Ord]{\alpha}$ as follows:
\begin{align}
\CProd{Ord}(\alpha,\beta) & = \{\langle x,y\rangle\mid x,y\in \Day[Ord]{\alpha}\wedge\left((\born[Ord]{\,x}<\alpha\wedge \born[Ord]{\,y}<\alpha)\vee\right.&\notag\\
     & \strut\mkern19mu\left.(\born[Ord]{\,x}=\alpha\wedge \born[Ord]{\,y}\leq\beta) \vee (\born[Ord]{\,x}\leq\alpha\wedge \born[Ord]{\,y}=\beta) \right) \}&\\
\OProd{Ord}(\alpha,\beta) & = \{\langle x,y\rangle\mid x,y\in \Day[Ord]{\alpha}\wedge\left((\born[Ord]{\,x}<\alpha\wedge \born[Ord]{\,y}<\alpha)\vee\right.&\notag\\
     & \strut\mkern19mu\left.(\born[Ord]{\,x}=\alpha\wedge \born[Ord]{\,y}<\beta) \vee (\born[Ord]{\,x}<\alpha\wedge \born[Ord]{\,y}=\beta) \right) \}&
\end{align}
\end{definition}

The $^O$ and $^C$ superscripts are used, since the concepts are somewhat similar to open and closed intervals respectively.
Observe two properties of $\CProd{}$ and $\OProd{}$: $\CProd{Ord}(\alpha,\alpha)=\Day[Ord]{\alpha}\times\Day[Ord]{\alpha}$ and
$\bigcup_{\gamma<\beta}\CProd{Ord}(\alpha,\gamma) = \OProd{Ord}(\alpha,\beta)$.

As we already discussed, the order relation on surreals $\leq$ is too big to be a set, so
reasoning about it is complicated. For this reason, we will consider its subsets that
are sets. A restriction of the $\leq$ relation to any set, will be a subset of
$\Day{\alpha}\times \Day{\alpha}$ for some $\alpha$,
so we introduce a natural restriction:

\begin{definition}[Almost $\No$-order]
A relation $Ord$ is an \emph{almost $\No$-order} if $Ord\subseteq \Day[Ord]{\alpha}\times \Day[Ord]{\alpha}$
for some ordinal $\alpha$.
\end{definition}

\newcommand{\Comp}[2]{\mbox{Comp}({#1},{#2})}
\begin{definition}
Let $A$ be a set. A relation $Ord$ preserves the Comparison condition on $A$ (written $\Comp{Ord}{A}$)
if and only if
\begin{equation}
\forall\,x.\:\forall\,y.\:x \leq_{Ord} y \Leftrightarrow L_x \ll_{Ord} \{y\} \wedge \{x\}\ll_{Ord} \{y\}.
\end{equation}
\end{definition}

One of the crucial properties of our formalization is that we can complete the proofs using only (transfinite) induction, without
requiring any complex techniques available only in selected systems (e.g. we do not use induction-recursion or complicated recursion schemes).
Apart from the construction we here
show the first proof done this way in full.
The following Theorem \ref{ThP2subclaim} gives a form of uniqueness of the order on the surreal numbers, uses
two applications of (transfinite) induction. Many proofs in our formalization use two inductions in a similar way.

\begin{theorem}\label{Th:uniq}
Let $R$,$S$ be relation. The following facts hold.
\begin{enumerate}
\item if $R\,\cap \bigcup_{\gamma<\alpha} \Games \gamma = S\,\cap \bigcup_{\gamma<\alpha} \Games \gamma$
 where $\alpha$ is any ordinal, then:
 \begin{itemize}
 \item $\Day(R,\alpha)=\Day(S,\alpha)$,
 \item $\forall\,x\!\in\!\Day(R,\alpha).\:\born[R](a) = \born[S](a)$,
  \item $\forall\,\beta.\: R \cap \OProd{R}(\alpha,\beta) = S \cap \OProd{S}(\alpha,\beta)\wedge
 R \cap \CProd{R}(\alpha,\beta) = S \cap \CProd{S}(\alpha,\beta)$.
\end{itemize}
\item if $R,S$ are almost $\No$-order, $\Comp{R}{\CProd{R}(\alpha,\beta)}$, $\Comp{S}{\CProd{S}(\alpha,\beta)}$
then $R\cap\CProd{R}(\alpha,\beta) = S\cap \CProd{S}(\alpha,\beta)$.
\end{enumerate}
\end{theorem}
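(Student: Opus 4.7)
My plan is a joint transfinite induction. For Part~(1) I induct on $\alpha$, proving the three bullets simultaneously. The base $\alpha=0$ is immediate, since both $\Day[R]{0}$ and $\Day[S]{0}$ collapse to $\{\langle\emptyset,\emptyset\rangle\}$: the $\ll_{\cdot}$ condition is vacuous when $L=R=\emptyset$. In the inductive step, Definition~\ref{def:dayord} depends on the relation only through $L_x \ll_{Ord} R_x$; every query $\langle r,l\rangle \in Ord$ appearing there has $l,r \in \bigcup_{\beta<\alpha}\Day[Ord]{\beta}\subseteq \bigcup_{\gamma<\alpha}\Games{\gamma}$, a set that coincides for $R$ and $S$ by the inductive hypothesis and on which the two relations agree by the theorem's hypothesis; hence $\Day[R]{\alpha}=\Day[S]{\alpha}$. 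The birthday claim then follows because $\born[\cdot]{x}$ is pinned down purely by the sequence of days up to~$\alpha$. For the third bullet, Definition~\ref{defCOprod} immediately delivers the set equalities $\CProd{R}(\alpha,\beta)=\CProd{S}(\alpha,\beta)$ and $\OProd{R}(\alpha,\beta)=\OProd{S}(\alpha,\beta)$, and intersecting with $R$ or $S$ preserves the equality because every pair in these products sits in the stratum already controlled by the joint inductive invariant.

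For Part~(2) I would run a second transfinite induction, this time on the Hessenberg natural sum $\alpha\oplus\beta$, whose strict two-sided monotonicity exactly mirrors the symmetric clauses ``$\born{x}=\alpha,\born{y}\le\beta$'' and ``$\born{x}\le\alpha,\born{y}=\beta$'' of Definition~\ref{defCOprod}. Fix $\langle x,y\rangle \in \CProd{R}(\alpha,\beta)=\CProd{S}(\alpha,\beta)$, the set equality coming from Part~(1). The Comparison hypothesis $\Comp{R}{\CProd{R}(\alpha,\beta)}$ rewrites $\langle x,y\rangle \in R$ as a boolean combination of queries $\langle y,l\rangle\notin R$ for $l\in L_x$ and $\langle r,y\rangle\notin R$ for $r\in R_y$; every such pair lies in some $\CProd{R}(\alpha',\beta')$ or $\OProd{R}(\alpha',\beta')$ with $\alpha'\oplus\beta'<\alpha\oplus\beta$. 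The inductive hypothesis together with Part~(1) then forces $R$ and $S$ to agree on all these queries, and hence on $\langle x,y\rangle$ itself.

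The main obstacle, and the reason the three bullets of Part~(1) are packaged together, is the third one: upgrading agreement of $R$ and $S$ from $\bigcup_{\gamma<\alpha}\Games{\gamma}$ to pairs in which one coordinate has birthday exactly~$\alpha$. Carrying this upgrade inside the joint induction is essential, because Part~(2)'s Hessenberg-sum recursion strictly decreases $\alpha\oplus\beta$ at each Comparison-unfolding step but still lands in products that may contain day-$\alpha$ elements; only the strengthened form of Part~(1) keeps the two relations pinned together on such products. A secondary bookkeeping difficulty is verifying that each of the two symmetric clauses of Definition~\ref{defCOprod} really is strictly decreasing in the Hessenberg sum under the Comparison expansion, which is exactly where the monotonicity properties of the Hessenberg sum play their decisive role.
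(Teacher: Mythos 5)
Your treatment of Part~(1) by transfinite induction on $\alpha$ matches the paper's proof in both the base case and the inductive step, so that part is fine.

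The gap is in Part~(2). You invoke Part~(1) to obtain $\CProd{R}(\alpha,\beta)=\CProd{S}(\alpha,\beta)$ as ``the set equality coming from Part~(1),'' but Part~(1) is a conditional: it needs $R$ and $S$ to already agree on $\bigcup_{\gamma<\alpha}\Games{\gamma}$, and that is not a standing hypothesis of Part~(2); it has to be supplied by your induction. To identify $\Day[R]{\alpha}$ with $\Day[S]{\alpha}$ (and hence the two $\CProd{}$'s) one must have $R$ and $S$ agreeing on $\OProd{R}(\alpha,0)=\bigcup_{\gamma<\alpha}\CProd{R}(\gamma,\gamma)$, i.e.\ on every diagonal block $\CProd{R}(\gamma,\gamma)$ with $\gamma<\alpha$. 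Under your Hessenberg ordering this is exactly what is missing: if $\beta<\alpha$ there are $\gamma<\alpha$ with $\gamma\oplus\gamma\geq\alpha\oplus\beta$ (already $\alpha=2,\beta=0,\gamma=1$ gives $1\oplus1=2=2\oplus0$; transfinitely, $\alpha=\omega\cdot2,\beta=0,\gamma=\omega$ gives $\omega\oplus\omega=\omega\cdot2=\alpha\oplus\beta$), so the diagonal blocks you need for the Part~(1) step are not below the current stage in your well-order. Your observation that the Comparison expansion strictly decreases $\alpha\oplus\beta$ is correct, but it only controls the \emph{unfolding} dependency $\langle x,y\rangle \rightsquigarrow \langle y,l\rangle, \langle r,x\rangle$; it does not control the second, independent dependency, which is that $\CProd{R}(\alpha,\beta)$ as a \emph{set} is pinned down only after all of $\Day[R]{\alpha}$ is, and that in turn needs the full diagonal below $\alpha$.

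The paper avoids this by using the lexicographic order on $(\alpha,\beta)$ with $\alpha$ primary rather than $\alpha\oplus\beta$. It first proves a subclaim that, \emph{assuming} $R\cap\OProd{R}(\alpha,\beta)=S\cap\OProd{S}(\alpha,\beta)$, propagates agreement to the boundary $\CProd{}(\alpha,\beta)\setminus\OProd{}(\alpha,\beta)$; then two nested transfinite inductions discharge the assumption, an inner one on $\beta$ via $\OProd{Ord}(\alpha,\beta)=\bigcup_{\gamma<\beta}\CProd{Ord}(\alpha,\gamma)$ and an outer one on $\alpha$ via $\OProd{Ord}(\alpha,0)=\bigcup_{\gamma<\alpha}\CProd{Ord}(\gamma,\gamma)$. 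Under that order every diagonal $(\gamma,\gamma)$ with $\gamma<\alpha$ precedes $(\alpha,\beta)$, which is precisely what the Part~(1) invocation needs. (The Hessenberg sum does appear in the paper, but later, in the recursion for the binary arithmetic operations, where the dependency is purely the unfolding one and no day-equality side-condition arises.) To repair your proof you would have to replace $\alpha\oplus\beta$ by lexicographic $(\alpha,\beta)$, or explicitly strengthen the induction to go through the diagonal first.
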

\begin{proof}
Fact 1 is proved by transfinite induction.
Let $x=\{L_x\mid R_x\} \in \Day[R]{\alpha}$. Every element of $L_x \cup R_x$ is a memeber of $\Day[R]{\beta}$
for some $\beta<\alpha$, so by induction hypothesis it is also a member of $\Day[S]{\beta}$.
Similarly, $L_x \ll_S R_x$ (equivalently $\forall\,l\!\in\!L_x.\:\forall\,r\!\in\!R_x.\:l \not\leq_{S} r$)
 is a consequence of $L_x \ll_R R_x$ and
$R\,\cap \bigcup_{\gamma<\alpha} \Games \gamma = S\,\cap \bigcup_{\gamma<\alpha} \Games \gamma$,
therefore $x\in \Day[S]{\alpha}$. Since $\forall\,\beta\!\leq\!\alpha.\:\Day[R]{\beta} = \Day[S]{\beta}$
the remaining part of Fact 1 is straightforward.

To show Fact 2, first consider the following subclaim. If $R,S$ are almost $\No$-orders, then
\begin{multline}\label{ThP2subclaim}
\forall\,\alpha.\:\forall\,\beta.\:
\beta\leq \alpha\wedge \Comp{R}{\CProd{R}(\alpha,\beta)}\wedge\Comp{S}{\CProd{S}(\alpha,\beta)}\wedge\\
\strut\mkern43mu R \cap \OProd{R}(\alpha,\beta) = S \cap \OProd{S}(\alpha,\beta) \Rightarrow \\
R \cap \left(\CProd{R}(\alpha,\beta)\setminus \OProd{R}(\alpha,\beta)\right)
\subseteq
S \cap \left(\CProd{S}(\alpha,\beta)\setminus \OProd{S}(\alpha,\beta)\right)
\end{multline}
Let $\beta,\alpha$ such that
$\beta\leq \alpha$, $\Comp{R}{\CProd{R}(\alpha,\beta)}$, $\Comp{S}{\CProd{S}(\alpha,\beta)}$
and
\begin{equation}\label{ThP2eq1}
R\cap \OProd{R}(\alpha,\beta) = S \cap \OProd{S}(\alpha,\beta).
\end{equation}
Let $x,y$ such that $\langle x,y\rangle \in R \cap \left(\CProd{R}(\alpha,\beta)\setminus \OProd{R}(\alpha,\beta)\right)$.
Then, by definition \ref{defCOprod}, there are only two possible cases:
$\born[R]x = \alpha\wedge\born[R]y = \beta$ or $\born[R]x = \beta\wedge\born[R]y = \alpha$.
Without loss of generality we assume that $\born[R]x = \alpha\wedge\born[R]y = \beta$.
By Lemma \ref{lem:mono} we know that
$R\,\cap\, \bigcup_{\gamma<\alpha} \Games \gamma
= S\,\cap\, \bigcup_{\gamma<\alpha} \Games \gamma$,
hence $\langle x,y\rangle \in \CProd{S}(\alpha,\beta)\setminus\OProd{S}(\alpha,\beta)$, $\born[S]x = \alpha$, $\born[S]y = \beta$.
It remains to prove $\langle x,y\rangle \in S$. Since $\Comp{S}{\CProd{S}(\alpha,\beta)}$, this is equivalent
to $L_x\ll_S\{y\}\wedge\{x\}\ll_S R_y$.

To prove the first conjunct, suppose contrary to our claim, that $y \leq_S l$ for some $l\in L_x$.
Then either $\beta<\alpha$ or $\beta=\alpha$. In both cases
we get $\langle y,l\rangle\in S\,\cap\,\OProd{S}(\alpha,\beta)$
since $\born[S]l <\born[S]x$.
Hence $y \leq_R l$ by \eqref{ThP2eq1} contrary to $L_x\ll_R\{y\}$ (by $\langle x,y\rangle\in R$).
To show the secound conjunct, $\{x\}\ll_S R_y$,
 suppose
contrary to our claim, that $r \leq_S x$ for some $r\in R_y$.
Then $\born[S]r <\born[S]y$, $\langle r,x\rangle\in S\,\cap\,\OProd{S}(\alpha,\beta)$,
and finally $r \leq_R x$ by \eqref{ThP2eq1} contradicting ${x}\ll_R\{y\}$ (by $\langle x,y\rangle\in R$).

We can now easily infer $R \cap \CProd{R}(\alpha,\beta) = S \cap \CProd{S}(\alpha,\beta)$ from
$R \cap \OProd{R}(\alpha,\beta) = S \cap \OProd{S}(\alpha,\beta)$.
Next, using transfinite induction we can simplify the assumption, obtaining
\begin{multline}\label{ThP2subclaim2}
\forall\,\alpha.\:\forall\,\beta.\:
\beta\leq \alpha\wedge \Comp{R}{\CProd{R}(\alpha,\beta)}\wedge\Comp{S}{\CProd{S}(\alpha,\beta)}\wedge\\
 R \cap \OProd{R}(\alpha,0) = S \cap \OProd{S}(\alpha,0) \Rightarrow
R \cap \CProd{R}(\alpha,\beta) =
S \cap \CProd{S}(\alpha,\beta)
\end{multline}
with the help of the equation $\bigcup_{\gamma<\beta}\CProd{Ord}(\alpha,\gamma) = \OProd{Ord}(\alpha,\beta)$.
A second use of transfinite induction together with the equation
$\bigcup_{\beta<\alpha}\CProd{Ord}(\beta,\beta) = \OProd{Ord}(\alpha,0)$
completes the proof of fact 2.
\end{proof}

The above Theorem \ref{ThP2subclaim} allows defining the order using
its selected properties. We only show the main step in the construction, the
remaining part are two applications of transfininte induction, analogously to
what we did in the proof of the second part of the lemma \ref{ThP2subclaim}.

\begin{theorem}\label{Th:ex}
Let $\alpha,\beta$ ordinals, $R$ be relation such that
$\Comp{R}{\OProd{R}(\alpha,\beta)}$ and $R \subseteq \OProd{R}(\alpha,\beta)$.
Then
\begin{multline}
S \coloneqq R\,\cup\,\{\langle x,y\rangle\mid x,y \in \Day[R]{\alpha} \wedge
\left((\born[R]{x}= \alpha \wedge \born[R]{y}= \beta) \vee\right.\\
\left.(\born[R]{x}= \beta \wedge \born[R]{y}= \alpha)\right)\wedge
L_x \ll_{R} \{y\} \wedge \{x\} \ll_{R} R_y\}
\end{multline}
satisfies $\Comp{S}{\CProd{S}(\alpha,\beta)}$ and $S \subseteq \CProd{S}(\alpha,\beta)$.
\end{theorem}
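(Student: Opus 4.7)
The plan is to first verify that the transition from $R$ to $S$ leaves the day structure through $\Day[R]{\alpha}$ and the birthdays of its elements unchanged, so that the sets $\CProd{S}(\alpha,\beta)$ and $\OProd{S}(\alpha,\beta)$ coincide with their $R$-counterparts and $R\cap\OProd{R}(\alpha,\beta)=S\cap\OProd{S}(\alpha,\beta)$. After this identification, both conclusions of the theorem reduce to a case analysis on whether a given pair lies in $\OProd{S}(\alpha,\beta)$ or on the new boundary $\CProd{S}(\alpha,\beta)\setminus\OProd{S}(\alpha,\beta)$.

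For that first step I would observe that every pair in $S\setminus R$ has both endpoints in $\Day[R]{\alpha}$ and that at least one endpoint has birthday exactly $\alpha$ (the other being $\beta\leq\alpha$). Building $\Day[S]{\gamma}$ for $\gamma\leq\alpha$ only consults the restriction of $S$ to pairs both of whose endpoints have birthday strictly less than $\gamma$, so the added pairs never interfere, giving $\Day[R]{\gamma}=\Day[S]{\gamma}$ and agreement of the two birthday functions on these days. Theorem~\ref{Th:uniq}~(1) then delivers $\OProd{R}(\alpha,\beta)=\OProd{S}(\alpha,\beta)$, $\CProd{R}(\alpha,\beta)=\CProd{S}(\alpha,\beta)$, and $R\cap\OProd{R}(\alpha,\beta)=S\cap\OProd{S}(\alpha,\beta)$.

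With these identifications in hand, $S\subseteq\CProd{S}(\alpha,\beta)$ is immediate: $R\subseteq\OProd{R}(\alpha,\beta)\subseteq\CProd{S}(\alpha,\beta)$, while the explicit construction of the added pairs realizes exactly the missing birthday patterns $(\alpha,\beta)$ and $(\beta,\alpha)$ from Definition~\ref{defCOprod}. For $\Comp{S}{\CProd{S}(\alpha,\beta)}$ I would fix $\langle x,y\rangle\in\CProd{S}(\alpha,\beta)$ and distinguish two cases. If $\langle x,y\rangle\in\OProd{S}(\alpha,\beta)$, then membership in $S$ coincides with membership in $R$, and $\Comp{R}{\OProd{R}(\alpha,\beta)}$ gives the equivalence with $L_x\ll_R\{y\}\wedge\{x\}\ll_R R_y$. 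Otherwise the birthdays of $x,y$ are precisely $(\alpha,\beta)$ or $(\beta,\alpha)$, and by construction $\langle x,y\rangle\in S$ iff $L_x\ll_R\{y\}\wedge\{x\}\ll_R R_y$. In either case it remains to replace $\ll_R$ by $\ll_S$: the witnessing pairs $\langle y,l\rangle$ for $l\in L_x$ and $\langle r,x\rangle$ for $r\in R_y$ have one endpoint of birthday strictly less than $\alpha$ (via Lemma~\ref{lem:mono} applied to the left/right components) while the other is already in $\Day[R]{\alpha}$, so they lie in $\OProd{R}(\alpha,\beta)$, where $R$ and $S$ agree.

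The main obstacle I anticipate is the birthday bookkeeping in that final replacement: one has to split on $\beta<\alpha$ and $\beta=\alpha$ in Definition~\ref{defCOprod} and check in each subcase that the witnessing pairs genuinely fall into $\OProd{R}(\alpha,\beta)$, using Lemma~\ref{lem:mono} to transport birthdays across $R$ and $S$. Once this bookkeeping is set up, the remaining steps are a routine unfolding of the definitions of $\ll$ and $\mbox{Comp}$.
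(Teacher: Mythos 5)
The paper states Theorem~\ref{Th:ex} without giving a proof: it is presented as ``the main step'' of the order construction, with the remainder deferred to two transfinite inductions, so there is no proof in the text to compare against. Your argument is correct and fits naturally with the machinery set up in Section~\ref{s:constr}: you first show that extending $R$ to $S$ by boundary pairs leaves $\Day{\gamma}$ and birthdays unchanged for all $\gamma \leq \alpha$, then split $\CProd{S}(\alpha,\beta)$ into $\OProd{S}(\alpha,\beta)$ and its boundary, handle the interior via $\Comp{R}{\OProd{R}(\alpha,\beta)}$ and the boundary by the explicit definition of $S$, and transfer $\ll_R$ to $\ll_S$ by showing all witnessing pairs land in $\OProd{R}(\alpha,\beta)$ where $R$ and $S$ coincide. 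Two small remarks. First, the fact that an option $l \in L_x$ has $\born[R]{l} < \born[R]{x}$ comes straight from Definition~\ref{def:dayord} (the options of $x \in \Day[R]{\gamma}$ lie in $\bigcup_{\delta<\gamma}\Day[R]{\delta}$), not from Lemma~\ref{lem:mono}, which concerns preservation of \emph{non}-membership in days. Second, the identity $R\cap\OProd{R}(\alpha,\beta)=S\cap\OProd{S}(\alpha,\beta)$ is best justified directly from $S\setminus R\subseteq\CProd{S}(\alpha,\beta)\setminus\OProd{S}(\alpha,\beta)$ rather than read off from Theorem~\ref{Th:uniq}(1), whose hypothesis only pins down $R$ and $S$ on games strictly before day $\alpha$; in your setting the direct argument is the clean one. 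The ``birthday bookkeeping'' you flag as the main obstacle is indeed the genuine content of the proof --- one must check, with the appropriate disjunct of Definition~\ref{defCOprod} for each of the cases $\born{x}<\alpha$, $\born{x}=\alpha$, $\born{y}<\alpha$, $\born{y}=\alpha$, that $\langle y,l\rangle$ and $\langle r,x\rangle$ actually meet the $\beta$-bound required for $\OProd{R}(\alpha,\beta)$; once that is spelled out the argument closes.
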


We can now define the order relation. Proving that it is a function (its existence and uniqueness)
relies on the properties of
 $\Comp{Ord}{\CProd{Ord}(\alpha,\alpha)}$ and $Ord\subseteq \CProd{Ord}(\alpha,\alpha)$:
\begin{Mizar}{}
  let'$\alpha$'be'Ordinal;
  func'$\NoOrd{\alpha}$'->'Relation'means'::'SURREALO:def 12
    it'preserves_$\No$_Comparison_on'[:Day(it,$\alpha$),Day(it,$\alpha$):]'$\land$'it'$\subseteq$'[:Day(it,$\alpha$),Day(it,$\alpha$):];
\end{Mizar}

For a given $\alpha$, this relation
is still a set with all usual restrictions of sets.
However, in the formalization we can consider
a different relation for each particular day.
We define the type \miz{surreal}
as the members of at least one day of the form $\Day[\NoOrdInd{\alpha}]{\alpha}$.
Similarly, we can define the order  $x \leq y$ (as a predicate and not a set-theoretic relation)
 as true when
$\langle x,y\rangle$ is an element of at least one  $\NoOrd{\alpha}$ as follows:\\
\begin{minipage}{.49\textwidth}
\begin{Mizar}{x}
''let'x'be'object;
''attr'x'is'surreal'means
''''$\exists\,\alpha$'be'Ordinal.'x'$\in$'$\Day[\NoOrdInd{\alpha}]{\alpha}$;
\end{Mizar}
\end{minipage}
\begin{minipage}{.49\textwidth}
\begin{Mizar}{x}
''let'x,y'be'surreal'object;
''pred'x'$\leq$'y'means
''''$\exists\,\alpha$'be'Ordinal.'x'$\leq_{\NoOrdInd{\alpha}}$'y;
\end{Mizar}
\end{minipage}\\
This gives us the Concept and Comparison properties,
that encapsulate the constructed surreals.
With just the help of standard (transfinite) induction, we created a theoretical heaven, where
the properties of Conway numbers are satisfied. In the next section, we will also introduce
the canonical representation \cite{Ehrlich} that links this to the tree-theoretic approach.

\section{The Surreal Numbers as a Field}\label{s:field}
For a surreal number  $x$, Conway introduces the somewhat confusing concept of a \emph{typical
  member} of $L_x$ and $R_x$ denoted by $x^L$ and $x^R$, respecively. We will use these only
to define how functions affect the left and right parts of a surreal number.
More formally $f(x^L)$ will denote $\{f(y) \mid y \in L_x\}$ (analogous for $x^R$ in $R_x$).\footnote{
Conway also uses typical members in more ambiguous ways, i.e., $x=\{x^L\mid x^R\}$. We will avoid these in this paper.}

With this notation, the unary negation can be simply written as $-x=\{-x^R\mid-x^L\}$ and unfolds to the full
$-x = \{\{-x_r\mid x_r \in R_x\}\mid \{-x_l\mid x_l \in L_x\}\}$.
We write the definitions of the remaining operations only using the typical member notation: $x+y = \{x^L+y,x+ y^L \mid x^R+y,x+y^R\}$,
$x\cdot y = \{
    x^L\cdot y+x\cdot y^L-x^L\cdot y^L,
    x^R\cdot y+x\cdot y^R-\!x^R\cdot y^R \mid
    x^L\cdot y+x\cdot y^R-\!x^L\cdot y^R,
    x^R\cdot y+x\cdot y^L-\!x^R\cdot y^L\}$
where the comma corresponds to the different possible ways how elements are constructed, formally corresponding to a union.

The definion of any arithmetic operation on the surreal numbers would require some complicated recursion
principle, a different one for each operation. To avoid this, all arithmetic operations defined in
our formalization are introduced in three phases.
To define operation $\star$ (such as $+$, $-$, $\cdot$, $^{-1}$) we first define $\star_\alpha$ on
a specific domain of surreal numbers restricted to $\Day{\alpha}$. We will denote this domain by $X_\alpha$
so that we can uniformly cover unary and binary operations. In particular, $X_\alpha$ will be a subset of
surreal numbers for unary $\star$ and a set of pairs for binary operations.
Subsequently, we prove that the application of the operator $\star_\alpha$ on surreal arguments is also surreal. For this
proof we recursively use the properties of $\star_\beta$ for $\beta<\alpha$
as well as for $\star_\alpha$.
Finally, we can define $\star$ as $\star_\alpha$, where $\alpha$ is determined by the arguments given to $\star$.

The second step of each of the definitions is typically most involved and often requires proving several additional
properties. For example, consider the proof that multiplication
restricted to $X_\alpha$ preserves the surreal type (the definition of multiplication relies on addition, so this is
of course done after addition is defined and its basic properties are proved).
The formal goal is: $\forall\,x.\:\forall\,y.\: \langle x,y\rangle \in X_\alpha \Rightarrow x\cdot_\alpha y \mbox{ is surreal}$.
For this, we only have to show  that $x\cdot_\alpha y$ satisfies the \emph{Concept} condition and (following Conway's work for $\cdot$)
requires the following four properties of $\cdot_\beta$ for any $\beta < \alpha$:
\begin{itemize}
\item $\forall\,x.\:\forall\,y.\: \langle x,y\rangle \in X_\beta \Rightarrow x\cdot_\beta y \mbox{ is surreal}$,
\item $\forall\,x.\:\forall\,y.\:\langle x,y\rangle \in X_\beta \Rightarrow x \cdot_\beta y = y \cdot_\beta x$,
\item $\forall\,x_1.\:\forall\,x_2.\:\forall\,y.\:
\langle x_1,y\rangle \in X_\beta \wedge
\langle x_2,y\rangle \in X_\beta \wedge
x_1 \thickapprox x_2
\Rightarrow x_1\cdot_\beta y_1 \thickapprox x_2\cdot_\beta y_1$,
\item $\forall\,x_1.\:\forall\,x_2.\:\forall\,y_1.\:\forall\,y_2.\:\ldots\:\wedge
x_1 < x_2 \wedge y_1 < y_2 \Rightarrow x_1\cdot_\beta y_2+x_2\cdot_\beta y_1 < x_1\cdot_\beta y_1+x_2\cdot_\beta y_2$%
\footnote{where $\ldots$ is the union of assumptions of the form
$\langle x_i,y_j\rangle \in X_\beta$ for every occurrence of $x_i\cdot_\beta y_j$.},
\end{itemize}
necessary in the proof for $\cdot_\alpha$ (proof by induction over $\beta$ of the conjunction of these properties).

In the final step of each definition, that is to define $\star$ based on $\star_\alpha$, we need
to specify the $X_\alpha$ on which it is defined. For the unary operations $-$, $^{-1}$
(but also the square root $\sqrt{.}$, unique element $\Uniq{.}$ for each class $[x]_\thickapprox$ introduced in Section~\ref{s:subset},
and $\Nomega^{.}$ in Section~\ref{s:cnf})
we can
define $\star_\alpha$ on the set $X_\alpha\coloneqq\Day\alpha$.
For binary operations, we introduce $X_\alpha\coloneqq \KPtriangle\!^\alpha$, i.e.,
the set of pairs $\langle x,y\rangle$ where $\born{x} \oplus \born{y} \leq \alpha$
and $\oplus$ is the natural Hessenberg sum of ordinals.
The ordinals are not a set, so we use the same trick as in Definition \ref{def:dayord}, namely
a monotonously increasing sequence of functions (they need to be monotonous, that is only add new
pairs to the set-theoretic functions in order to preserve consitency).
The sequence $\{\Day\alpha\}$ is naturally increasing, and $\{\KPtriangle\!^\alpha\}$ is monotonously
increasing since $\oplus$ is monotonous.
With this, we can show that the sequence of functions $\star_\alpha$
is increasing w.r.t. set inclusion.
We call this property $\subseteq$-\emph{monotone}.
More precisely, a sequence of functions $S$ is $\subseteq$-\emph{monotone}
iff $\forall\,\alpha\in domain(S).\:\forall\,\beta\in domain(S).\:\beta\leq \alpha \Rightarrow
S(\beta)\subseteq S(\alpha)$.
As a consequence $\bigcup_{\beta<\alpha} \star_\beta$ is a function defined on
$\bigcup_{\beta<\alpha} X_\beta$. It already has the necessary properties, but we still have to show
that the results of the function are surreal numbers. The above shows the main stages of the proposed
approach to defining functions on the surreal numbers. The initial construction does not know the properties
of the results; subsequently these are proved; and finally we show that the result is surreal.

In order to efficiently proceed with the formalization, we prove a second-order theorem (referred to as a scheme in Mizar)
that will allow us to efficiently define all these operations. The scheme constructs a sequence, where
the definition has access to all previous elements.
\begin{theorem}\label{seqexuni}
  Let $X$ be a function from ordinals that returns an arbitrary set, and
  $H$ a binary function (the first argument is an arbitrary set but the second must be
  a $\subseteq$-monotone sequence of functions) such that
\begin{itemize}
\item $\forall\,S\mkern3mu be\mkern2mu\subseteq\!-\emph{monotone sequence of functions}.\,
\left(\forall\,\alpha\!\in\!domain(S).\, domain(S(\alpha))=X(\alpha)\right)\Rightarrow$\\
    $\strut\mkern25mu\left(\forall\,\alpha\in domain(S).\:\forall\,x\in domain(S(\alpha)).\;H(x,S) = H(x,\restr{S}{\alpha})\right)$,
\item $\forall\,\alpha.\:\forall\,\beta.\:\beta \leq \alpha \Rightarrow X(\beta) \subseteq X(\alpha)$.
\end{itemize}
Then for every ordinal $\theta$, there exists a unique $\subseteq$-\emph{monotone} sequence $S$ of functions of length $\theta$
where $\forall\,\alpha \in domain(S).\:  domain(S(\alpha))=X(\alpha)\wedge (\forall\,x\in X(\alpha).\: S(\alpha)(x)\!=\!H(x,S))$.
\end{theorem}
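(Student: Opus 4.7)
I would prove existence and uniqueness separately, both by transfinite induction on $\alpha<\theta$. The cleanest route is to construct the full sequence via the standard transfinite recursion scheme and then verify each required property by induction on its length.

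\textbf{Existence.} First I invoke the standard transfinite recursion scheme (the one already present in the MML) to build a sequence $S$ of length $\theta$ with
\[
S(\alpha) \;:=\; \{\,\langle x, H(x, \restr{S}{\alpha})\rangle : x \in X(\alpha)\,\}.
\]
This is a legal definition because the right-hand side depends only on $\restr{S}{\alpha}$. By construction $\mathrm{dom}\,S(\alpha) = X(\alpha)$ for every $\alpha<\theta$. The core step is then to prove by transfinite induction on $\alpha\leq\theta$ the strengthened invariant that $\restr{S}{\alpha}$ is $\subseteq$-monotone (with the correct stage-wise domains $X(\gamma)$). The only non-trivial case is checking, for $\beta<\alpha$, that $S(\beta)\subseteq S(\alpha)$, i.e.\ $S(\beta)(x)=S(\alpha)(x)$ for every $x\in X(\beta)\subseteq X(\alpha)$; this reduces to $H(x,\restr{S}{\beta}) = H(x,\restr{S}{\alpha})$, which is exactly the locality hypothesis applied to $\restr{S}{\alpha}$, whose $\subseteq$-monotonicity and correct domains are available by the inductive hypothesis. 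Once $S$ is known to be $\subseteq$-monotone on the whole length $\theta$, I apply locality one more time to $S$ itself to obtain $S(\alpha)(x)=H(x,\restr{S}{\alpha})=H(x,S)$ for $\alpha\in\mathrm{dom}\,S$ and $x\in X(\alpha)$, which is the claimed recursive equation.

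\textbf{Uniqueness.} Given two sequences $S_1,S_2$ satisfying the conclusion, I prove $S_1(\alpha)=S_2(\alpha)$ by transfinite induction on $\alpha<\theta$. Under the inductive hypothesis $\restr{S_1}{\alpha}=\restr{S_2}{\alpha}$, locality applied in turn to $S_1$ and $S_2$ yields, for every $x\in X(\alpha)$,
\[
S_1(\alpha)(x)=H(x,S_1)=H(x,\restr{S_1}{\alpha})=H(x,\restr{S_2}{\alpha})=H(x,S_2)=S_2(\alpha)(x),
\]
so the two functions agree on their common domain $X(\alpha)$ and hence as sets of pairs.

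\textbf{Main obstacle.} The subtle point is the apparently circular dependence between applying locality (which demands an already $\subseteq$-monotone sequence with the right domains) and establishing $\subseteq$-monotonicity of the newly constructed $S$. My way around it is to strengthen the induction invariant so that the $\subseteq$-monotonicity of $\restr{S}{\alpha}$ and the correctness of its domains are both carried along during the recursion, making the locality hypothesis legally applicable at every stage and collapsing the circularity. The rest is bookkeeping: the successor and limit cases of the induction both reduce to that single application of locality.
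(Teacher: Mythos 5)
Your proof is correct. The paper states this scheme without reproducing its proof in the text, but the route you take is the natural one and matches how such schemes are established in Mizar: define $S$ by the ordinary transfinite recursion scheme, prove $\subseteq$-monotonicity of $\restr{S}{\alpha}$ by a secondary transfinite induction that makes the locality hypothesis legally applicable at each stage (thereby breaking the apparent circularity you identify), and derive uniqueness pointwise by two applications of locality through the common restriction. One small remark: the correctness of the stage-wise domains $\mathrm{dom}\,S(\alpha)=X(\alpha)$ follows immediately from the recursive construction and need not be carried in the strengthened invariant — only $\subseteq$-monotonicity of $\restr{S}{\alpha}$ genuinely requires the induction.
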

\newcommand{\image}{\ensuremath{^{\diagdown}}}

Recall, that $-x = \{-x^R\mid-x^L\}$, or more precisely
$-_\alpha(x) = \{ (\bigcup_{\beta<\alpha} -_\beta)\image R_x \mid
(\bigcup_{\beta<\alpha} -_\beta) \image L_x\}$,
where the image of a set is denoted by \image.
Thus $-_\alpha(x)$ depends on $x$ and $\bigcup_{\beta<\alpha} -_\beta$.
However, we define $-_\alpha(x)$ as $H(x, \{-_\beta\}_{\beta\leq\alpha})$
to be able to access the whole sequence and not just its union.
Additionally, the constraint $x\in X(\alpha) \Rightarrow H(x,S) = H(x,\restr{S}{\alpha})$
intuitively means that we expect
$-_\alpha(x)  = H(x, \{-_\beta\}_{\beta\leq\alpha}) = H(x, \{-_\beta\}_{\beta\leq\theta})$.

We show the details of the definition of $-x$ and $x\!+\!y$ using this theorem.
For $-x$, we simply use $\theta = \born{x}$, $X(\alpha)=\Day\alpha$,
$H(o,S) =  \langle (\bigcup rng\,S)\image R_{o}, (\bigcup rng\,S)\image L_{o} \rangle$.
Finally, define $-x$ to be $-_{\born{x}} x$.
To define addition $x\!+\!y$,
we use $\theta = \born{x} \oplus \born{y}$, $X(\alpha)= \KPtriangle\!^\alpha$,
$H(o,S) = \langle (\bigcup rng\,S)\image(  (L_{L_o}\!\times\!\{R_{o}\})\cup (\{L_o\}\!\times\!L_{R_o})),
(\bigcup rng\,S)\image((R_{L_o}\!\times\!\{R_o\})\cup (\{L_o\}\!\times\! R_{R_o}))\rangle$
and define $x + y$ as $x +_{\born{x} \oplus \born{y}} y$.
To clarify the second case, notice that $o$ as a member of the triangle operator is a pair $o=\langle a,b\rangle$
for some surreal $a,b$, so
the expression  $H$ can be represented as
$\langle +\image((L_a\! \times\! \{b\}) \cup (\{a\}\!\times\!L_b)), +\image(R_a\!\times\!\{b\})\cup(\{a\}\!\times\! R_b))\rangle$ equal $\{L_a\!+\!b, a\!+\!L_b \mid R_a\!+\!b, a\!+\!R_b\}$.
This is well-defined and reduces to Conway's definition of $a+b$.

After defining all the operations, we show that their results are surreal numbers.
Like Schleicher \cite{schleicher}, we show the properties of the operations alongside
the properties of the order.
 We covered all of Schleicher's chapter 3 \cite{schleicher}, additionally making use of
 some  more detailed proofs found in Grimm \cite{grimm2012} and Tondering \cite{tondering}.
 We finally formally show that $\No$ has all the properties of an ordered field:
\begin{equation}\label{field_axiom}
\begin{array}{r@{\:}c@{\:}lcr@{\:}c@{\:}lcr@{\:}c@{\:}l}
x + y & = & y + x && x + 0_{\No} &=& x && x\cdot(y+z) &\thickapprox& x\cdot y + x\cdot z \\
x\cdot y & = & y\cdot x   && x \cdot 1_{\No} &=& x && (x\cdot y)\cdot z &\thickapprox& x\cdot(y\cdot z)\\
  (x+y)+z & = & x+(y+z) && x + (-x) &\thickapprox& 0_{\No} && \multicolumn{3}{c}{x \not\thickapprox 0_\No \rightarrow x \cdot x^{-1} \thickapprox 1_{\No}}
\end{array}
\end{equation}
The inverse operation deserves special attention. Conway uses changing the equivalence
class representative and hides a secondary recursion.
Let $x > 0_\No$.
Then
$x\thickapprox p \coloneqq \{0_{\No},x^L\mid x^R\}$
where $x^L$, $x^R$ are restricted to positive typical members.
More formally, $p$ is constructed by eliminating the non-positive
elements from $L_x$, $R_y$ and adding $0_\No$ in its left component.
Then Conway defines $y\coloneqq x^{-1}$ using a strongly informal recursive property,
that expresses the transitive closure of subsequent generations of typical members of $y$
as follows:
\begin{equation}\label{inv1}
y = \{{0_{\No}},\frac{{1_{\No}}+(p^R-p)y^L}{p^R},
   \frac{{1_{\No}}+(p^L-p)y^R}{p^L} \mid
   \frac{{1_{\No}}+(p^L-p)y^L}{p^L},
   \frac{{1_{\No}}+(p^R-p)y^R}{p^R}\}
\end{equation}
where $p_L$, $p_R$ are reserved only for positive typical members%
\footnote{To show the intricacy of the definition, consider
  $x=5_\No \thickapprox p\coloneqq \{0_\No,4_\No\mid\,\}$
where there is no $p_R$ and $p_L=4_\No$.
Since $y=\{ 0_{\No},\ldots\mid \ldots\}$ we start by $y_L = 0_\No$
and obtain a new $y_R = \frac{1_\No+(4_\No-5_\No)\cdot 1_\No}{4_\No} =\frac{1}{4}_\No$. Then
$\frac{1_\No+(4_\No-5_\No)\cdot \frac{1}{4}_\No}{4_\No}=\frac{3}{16}_\No$ is a new $y_L$,
and $\frac{1_\No+(4_\No-5_\No)\cdot \frac{3}{16}_\No}{4_\No} =\frac{13}{64}_\No$ is a new $y_R$
and so on, creating $y=\{ 0_{\No},\frac{3}{16}_\No,\frac{51}{256}_\No,\ldots\mid \frac{1}{4}_\No, \frac{13}{64}_\No,\frac{205}{1024}_\No \ldots\}$.}.
We imitate Schleicher and Stoll's proof \cite{schleicher},
formalizing using transitive closures of subsequent right and left closures of typical members. For details, see the \texttt{SURREALI.miz} formalization.

A somewhat similar approach for highly-recursive definitions has been used to define the square root.
According to Conway \cite{Conway}, Clive Bach defined the square root of a non-negative number as follows:
\begin{equation}\label{root}
\sqrt{x}=y = \{\sqrt{x^L},\frac{x+y^L\cdot y^R}{y^L+ y^R} \mid
   \sqrt{x^R},\frac{x+y^L\cdot y^{L\bullet}}{y^{L}+ y^{L\bullet}},
   \frac{x+y^R\cdot y^{R\bullet}}{y^{R}+ y^{R\bullet}}\}
\end{equation}
where $x^L$, $x^R$ are non-negative typical members of $x$, and
$y^L$, $y^{L\bullet}$, $y^R$, $y^{R\bullet}$ are options of $y$ chosen so that no
one of the three denominators is zero. Conway leaves the correctness proof to the reader \cite{Conway}.

\newcommand{\SQRT}[3]{\texttt{S}({#1},{#2},{#3})}

\newcommand{\varL}{\texttt{L}}
\newcommand{\varR}{\texttt{R}}

We formally define this in two steps. First, we designate the initial sets $\varL\coloneqq \{\sqrt{x_L}\mid 0_\No \leq x^L \in L_x \}$,
$\varR\coloneqq \{\sqrt{x^R}\mid 0\leq x^R \in R_x\}$, and then we perform a transitive closure. We focus on the second step. For the square root definition, we introduce a helper definition $S$.
\begin{definition}[Square root]
Let $x$ be a surreal numbers, $\varL,\varR$ be surreal number sets.
We define $\texttt{S}(x,A,B)$ to be
      $\{\frac{x+ a \cdot b}{a+b}\mid a \in A \wedge b \in B \wedge 0_\No \not\thickapprox a+b\}$.
Consider the sequences $\{\varL_n\}_{n\in\mathbb{N}}$,
$\{\varR_n\}_{n\in\mathbb{N}}$ defined as follows:
\begin{equation}
\varL_0 = \varL, \;\; \varR_0 = \varR,\;\;
\varL_{n+1} = \varL_n \cup \SQRT{x}{\varL^n}{\varR_n},\;\;
\varR_{n+1} = \varR_n \cup \SQRT{x}{\varL_n}{\varL_n} \cup \SQRT{x}{\varR_n}{\varR_n}.
\end{equation}
Then, we can define $\sqrt{x,\varL,\varR}\coloneqq\{\bigcup_{n\in\mathbb{N}}\varL_n\mid  \bigcup_{n\in\mathbb{N}}\varR_n\}$.
\end{definition}

With the above definition, we can apply Theorem \ref{seqexuni} using $\theta = \born{x}$, $X(\alpha) =\Day\alpha$,
\begin{equation}\label{defHH}H(o,S)= \sqrt{o, \{(\bigcup rng\,S)(o^L)\mid 0_\No\leq o^L \in L_o \},
\{(\bigcup rng\,S) o^R\mid 0_\No\leq o^R \in R_o\}}\end{equation} to obtain the final
function $\sqrt{x}$.

We have proved the usual properties of the square root, such as $\sqrt{x\cdot x}\thickapprox x$ for non-negative
surreals and $\sqrt{x}\,^{-1}\thickapprox \sqrt{x^{-1}}$ for positive surreals.
The definition proposed by Clive Bach can even be applied even to negative surreal
numbers. As expected, it does not behave well for these: rather than give us the
surcomplex numbers we instead show that different representatives of the same
equivalence class of a negative number give different square roots.
Indeed, consider an arbitrary positive $x$. Then
$-1_\No \thickapprox y = \{\:\mid (\sqrt{x\cdot x+1_\No}-x )\cdot (\sqrt{x\cdot x+1_\No}-x )\}$
but $\sqrt{-1_\No} = -1_\No$ and $\sqrt{y} < -x$.
With this, for any negative number $x$ we can construct a number $\thickapprox -1_\No$,
whose square root is less than $x$.

\section{Reals and Ordinals as Subsets of Surreal}\label{s:subset}
Conway \cite{Conway} showed that the real numbers are a subset of $\No$ without focusing on their construction.
Starting with a construction of {$^\ast$integers} (that include the {$^\ast$naturals})
and inverse, an $x$ would be called {\it $^\ast$real} if
$x\thickapprox\{x-\frac{1_{\No}}{n_{\No}}\mid x+\frac{1_{\No}}{n_{\No}}\}_{0<n}$ and $-k_{\No}< x < k_{\No}$
for some natural $k$. This was later restricted to dyadic numbers.
Grimm  \cite{grimm2012} directly constructed the dyadic numbers and defined
a bijection from *real surreal into real.

\newcommand{\Sur}[0]{\mathrm{s}}

We formalize these constructions, additionally showing the set inclusions
(similar to the MML's property $\mathbb{N}\subseteq\mathbb{Z}\subseteq\mathbb{Q}\subseteq\mathbb{R}$ but with dyadic numbers $\mathbb{D}$ and ordinals):
{\setlength{\abovedisplayskip}{3pt}%
\setlength{\belowdisplayskip}{5pt}%
\begin{equation}\label{SDdef}
\strut\mkern-20mu\Sur_\mathbb{Z}(i) = \left\{
\begin{array}{cl}
0_{\No} & \mathrm{if}\;i=0, \\
\strut\mkern-2mu\{\Sur_\mathbb{Z}(i-1)\mid\,\} & \mathrm{if}\;i>0,\\
\strut\mkern-2mu\{\,\mid \Sur_\mathbb{Z}(i+1)\} & \mathrm{if}\;i<0,
\end{array}\right.
\Sur_\mathbb{D}(d) = \left\{
\begin{array}{cl}
\Sur_\mathbb{Z}(d) & \textrm{if}\;d \in\mathbb{Z}, \\
\strut\mkern-3mu\{\Sur_\mathbb{D}(\frac{j}{2^p})\mid \Sur_\mathbb{D}(\frac{j+1}{2^p})\}
& \textrm{if}\;d=\frac{2j+1}{2^{p+1}}\;\textrm{for}\\
& \textrm{some}\;j\in \mathbb{Z}, p \in \mathbb{N}.
\end{array}\right.
\end{equation}
}and construct $\Sur_\mathbb{R}(r)$
which selects the $\thickapprox$ equivalence class representative of the number, such that
$\{\Sur_\mathbb{D}(\frac{\lceil r\cdot(2^n)-1 \rceil}{2^n})\mid \Sur_\mathbb{D}(\frac {\lfloor r\cdot (2^n)+1 \rfloor}{2^n})\}_{n\in\mathbb{N}}$,
 $\restr{\Sur_\mathbb{R}}{\mathbb{D}} = \Sur_\mathbb{D}$, $\restr{\Sur_\mathbb{D}}{\mathbb{Z}} = \Sur_\mathbb{Z}$, and the basic operations $+$, $\cdot$ are preserved. We discuss
only two most important points: the set-theoretic definition of $\Sur_\mathbb{D}:\mathbb{D}\mapsto\Day{\omega}$
and the choice operator.

We denote the set of dyadic numbers of the form $\frac{j}{2^n}$ (where $j\in\mathbb{Z}$) as $\mathbb{D}_n$.
Observe that the sequence $\mathbb{D}_0=\mathbb{Z}$, $\mathbb{D}_n$ is increasing and
$\bigcup_{n \in \mathbb{N}} \mathbb{D}_n=\mathbb{D}$.
Then, for any $n$ we define a {\it recursive operator}
$I_n: (\Day{\omega \oplus n})^{\mathbb{D}_n}\mapsto (\Day{\omega\oplus (n\!+\!1)})^{\mathbb{D}_{n\!+\!1}}$
which extends the domain of $\mathbb{D}_n$ to $\mathbb{D}_{n\!+\!1}$,
assigning $\mathbb{D}_{n\!+\!1}\setminus \mathbb{D}_n$ values according to \eqref{SDdef}.
It easily follows that
$\{a\mid b\} \in \Day{\omega\oplus (n\!+\!1)}$ if $a,b\in \Day{\omega\oplus n}$.
Then using $\Sur_\mathbb{Z}$ on $\mathbb{D}_0$ with MML's fixed point combinator
\texttt{NAT\_1:sch 11}
we construct $\Sur_\mathbb{D}$ and prove that the values belong to
$\bigcup_{n \in \mathbb{N}}\Day{n}$.

We now want to obtain $\restr{\Sur_\mathbb{R}}{\mathbb{D}} = \Sur_\mathbb{D}$. To choose a representative of the equivalence class $[x]$ for a given $x$
we can use ``gluing'' $\Sur_\mathbb{R}(r) =\Sur_\mathbb{D}(r)$ for dyadic $r$ and use global choice otherwise (taking special care, since
choosing $c$ from $[x]_\thickapprox$ is a proper class).
Using an adaptation of {\it Scott's trick}, we can expect $c$ to be the youngest (in the sense of%
$\mathfrak{b}$).
We therefore can replace the proper class $[x]_\thickapprox$ by the set $\{y \in \Day{\born{x}}:y \thickapprox x\}$.
However, this set still can have several elements,
so in the first place we aim to reduce the cardinality of
$L_c$, $R_c$ and the cardinality of their union.
Using the Hessenberg sum, we can
minimize all three.
For this, we will
use  $\overline{\overline{L_c}} \oplus \overline{\overline{R_c}}$ instead of $\overline{\overline{L_c\cup R_c}}$
and use properties of $\thickapprox$.
Finally, we would like the globally selected $c$ to be suitable. By suitable,
we mean minimal w.r.t. $\born{}$ as well as having only suitable elements in $L_c, R_c$.
For this, we  again use a transfinite sequence, the last element of which has only suitably selected elements
of $\Day{\alpha}$%
{
\begin{Mizar}{x,Y,A}
  let'$\alpha$'be'Ordinal;
  func $\Uniq{}$_op'$\alpha$'->'Sequence'means'::'SURREALO:def 29
    dom'it'='succ'$\alpha$'$\wedge$'$\forall\,\beta$'be'Ordinal.'$\beta$'$\in$'succ'$\alpha$'$\Rightarrow$'(it.$\beta$'$\subseteq$'$\Day\beta$'$\land$
       ($\forall\,$x'be'Surreal.'x'$\in$'it.$\beta$'$\Leftrightarrow$'(x'$\in$'union'rng'($\restr{it}{\beta}$)'$\vee$'($\beta$'='born_eq'x'$\land$
         $\exists\,$Y'be'non'empty'surreal-membered'set'.
             Y'='born_eq_set'x'$\cap$'made_of'union'rng'($\restr{it}{\beta}$)'$\land$'x'='the'Y'-smallest'Surreal))));
\end{Mizar}
}
\noindent where \miz{the} is the global choice operator, \mizV[x]{born_eq x} is the $\mathfrak{b}$
of youngest surreal that is $\thickapprox x$, \mizV[x]{born_eq_set x} is the set of youngest surreal that is $\thickapprox x$,
\mizV[X]{made_of'X} is the set of such surreals that
their both left and right members belong to $X$, and
\mizV[Y]{Y-smallest} means that it has the smallest cardinality, that is
$\overline{\overline{L_c}} \oplus \overline{\overline{R_c}}$.

The definition implicitly assumes that \mizV[x]{born_eq_set'x'$\cap$'made_of'$\bigcup$'rng'($\restr{it}{\beta}$)} is non-empty because \miz{Y} is non-empty.
In consequence, \mizV[Y]{Y-smallest Surreal} is also non-empty
and we can use global choice.
Finally, we can use
transfinite induction to select a unique element for each class $[x]_\thickapprox$, denoted $\Uniq{x}$.

It is important to notice, that the properties that need to be proved about
the globally selected numbers
(such as youngest, smallest cardinality, suitable member) must be proved by
simultaneous induction, just like it was the case with the correctness of multiplication proofs in Section~\ref{s:field}.
We call the type
of such elements \miz{uSurreal}. This type is crucial for the definition of $\Sur_\mathbb{R}$, as values of $\Sur_\mathbb{D}$ are \miz{uSurreal},
and before $\Day{\omega}$ there are no more \miz{uSurreal}. This means that $\Sur_\mathbb{R}$ can be defined using $\Uniq{}$ without
``gluing''.

Next, we define {$^\ast${ordinal}} numbers to be all the surreal numbers $x$, for which $R_x=\emptyset$ (following Conway).
Subsequently, again using a transfinite sequence, we define the operator $\Sur_\On$ from ordinals to {$^\ast${ordinal}}.
We additionally apply $\Uniq{}$ to it, so that the values are \miz{uSurreal}
(this is justified, as we constructed $\Uniq{}$ to be {$^\ast${ordinal}} on {$^\ast${ordinal}}).
The proposed construction of \miz{uSurreal} builds a bridge that helps us formally combine proofs about the surreal numbers
in the general sense with the proofs that use the tree-theoretic definition that uses real and ordinal. This will be
important for several results in the next section.

\section{Conway Normal Form (CNF)}\label{s:cnf}
Conway has partitioned $\No$ using a property similar to Archimedianness. Under this partition, the surreals behave somewhat similarly to a vector space with arbitrary (ordinal-indexed) dimensions. The Conway Normal Form theorem will give a concept akin to coordinates in that space. These are all weak analogies, however, they give a hint as to why CNF is important for surreal numbers.

The most common ordered field, the real numbers, has the
Archimedean property, i.e., for all $x,y\in\mathbb{R}^+$, $x< n\cdot y$ for some $n\in\mathbb{N}$.
However, this is not true for infinite cardinals, so
the Archimedian-like partition of surreal is defined somewhat differently
(remember that $\Sur_\mathbb{Z}$ is the natural embedding of integers into surreal, as defined in the previous Section):

\begin{definition}[commensurate, infinitely less]
Let $x,y$ be surreal numbers. We say $x,y$ are {\it commensurate} if $x<\Sur_\mathbb{Z}(n)\cdot y \land y<\Sur_\mathbb{Z}(m)\cdot x$ for some $n,m\in \mathbb{N}$. We say $x$ is {\it infinitely less} than $y$
and write $x <^\infty y$ if $ x\cdot \Sur_\mathbb{Z}(n)< y$ for all
$n\in \mathbb{N}\setminus\{0\}$.
\end{definition}

As the definition of commensurate numbers only makes sense for positive numbers, we will refer to $x,y$ as \emph{commensurate in absolute terms} if $|x|$, $|y|$ are commensurate,
where $|\,\cdot\,|$ is the standard absolute value. (Similarly $<^\infty$ is defined in general, but only used for non-negative numbers.)
We next define the {\it power of $\Nomega$} (also called $\Nomega$-map~\cite{Conway})
as follows:
\begin{definition}[$\Nomega^\cdot$]
Let $x$ be a surreal. Then the $x$ {\it power of $\Nomega$}, written $\Nomega^x$ is defined as:
\begin{equation}
\Nomega^x = \{0_\No, \Sur_\mathbb{R}(r)\cdot\Nomega^{x_L} \mid \Sur_\mathbb{R}(s)\cdot \Nomega^{x_R}\}
\end{equation}
where $r,s$ range over all positive reals and $\Nomega =\Sur_{\On}(\omega)$.
\end{definition}
To define the function formally, we again use Theorem~\ref{seqexuni} with
$\theta = \born{x}$, $X(\alpha)=\Day\alpha$, and
\begin{multline}H(o,S) = \, \{ \{0_\No\} \cup \{(\bigcup rng\,S)(o^L)\cdot \Sur_\mathbb{R}(r) \mid o^L \in L_o \wedge r\in\mathbb{R}^+ \},\\
\{(\bigcup rng\,S)(o^R)\cdot \Sur_\mathbb{R}(s) \mid o^R \in R_o \wedge s\in\mathbb{R}^+ \}\}.\end{multline}
Note, that $\Nomega^x$ is different from exponentiation (the differences are subtle, see \cite[page 38]{Conway}),
but many of its properties are similar:
$\Nomega^0 = 1_\No$, $\Nomega^{x+y} = \Nomega^{x} \cdot \Nomega^{y}$, and
$\Nomega^{x} \leq \Nomega^{y}$ if $x \leq y$
and additionally $\Nomega^{x} <^\infty \Nomega^{y}$ if $x < y$.

The underlying idea for Conway's Normal Form of $x\not\thickapprox 0_\No$ is the observation that
using the power of $\Nomega$ we can determine a unique leader $\Nomega^y$ commensurate in absolute terms with $x$:

\begin{theorem}
Let $x\not \thickapprox 0_\No$. Then
there exists a unique $y$ being \miz{uSurreal} such that $r\in  \mathbb{R}\setminus\{0\}$
for which $|x - \Sur_\mathbb{R}(r)\cdot \Nomega^{y}|<^\infty \Nomega^{y}$.
\end{theorem}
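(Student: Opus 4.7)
My plan is to split the proof into uniqueness and existence, and within existence, to construct the leader $y$ first and then the real coefficient $r$. I will assume $x>0_\No$ without loss of generality, reducing the negative case via $x\mapsto -x$, $r\mapsto -r$.

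For uniqueness, suppose $(y_1,r_1)$ and $(y_2,r_2)$ both witness the claim. If $y_1<y_2$ as uSurreals, then strict $<^\infty$-monotonicity of the $\Nomega$-map gives $\Nomega^{y_1}<^\infty\Nomega^{y_2}$. A triangle inequality applied to the first witness yields $|x|\leq\Sur_\mathbb{Z}(k)\cdot\Nomega^{y_1}$ for some $k\in\mathbb{N}$, so $|x|<^\infty\Nomega^{y_2}$; combining with the second witness gives $|\Sur_\mathbb{R}(r_2)|\cdot\Nomega^{y_2}<^\infty\Nomega^{y_2}$, which forces $r_2=0$ via the real Archimedean property transported through $\Sur_\mathbb{R},\Sur_\mathbb{Z}$, contradicting $r_2\neq 0$. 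Hence $y_1=y_2$, and then $|\Sur_\mathbb{R}(r_1-r_2)|\cdot\Nomega^y<^\infty\Nomega^y$ forces $r_1=r_2$ by the same Archimedean argument.

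For existence, I would first produce $y$ with $\Nomega^y$ commensurate in absolute terms with $x$. The natural cut $\{z:\Nomega^z<^\infty x\}\mid\{z:x<^\infty\Nomega^z\}$ is a proper class on both sides, so I would restrict to a suitable $\Day\delta$, with $\delta$ chosen in terms of $\born x$ and the behaviour of $\Nomega^{\cdot}$, obtaining sets $L_\delta,R_\delta$ that still satisfy $L_\delta\ll R_\delta$. The surreal $\{L_\delta\mid R_\delta\}$, after applying $\Uniq{}$, becomes a uSurreal $y$; a trichotomy argument rules out $\Nomega^y<^\infty x$ and $x<^\infty\Nomega^y$ (each would contradict the constructed cut via $<^\infty$-monotonicity of $\Nomega^{\cdot}$), establishing commensurability. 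For the coefficient, I would take $S=\{s\in\mathbb{R}_{>0}:\Sur_\mathbb{R}(s)\cdot\Nomega^y<x\}\cup\{0\}$; commensurability makes $S$ nonempty and bounded above in $\mathbb{R}$, so $r=\sup S$ exists. A standard ``bump $s$ by $1/n$ whenever the gap is not infinitely small'' argument then gives $|x-\Sur_\mathbb{R}(r)\cdot\Nomega^y|<^\infty\Nomega^y$.

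The hardest part will be the leader construction: the natural two-sided cut is proper class on each side, and I need to exhibit an explicit $\delta$ making $L_\delta$ cofinal in $L$ and $R_\delta$ coinitial in $R$ for all purposes relevant to $\thickapprox$. This rests on the relationship between $\born x$ and the birthdays of commensurable candidates $z$, combined with the strict $<^\infty$-monotonicity of $\Nomega^{\cdot}$ already established. Once $y$ is in hand, both the real-analytic supremum argument producing $r$ and the uniqueness proof reduce cleanly to instances of the real Archimedean property lifted to the surreals via $\Sur_\mathbb{R}$ and $\Sur_\mathbb{Z}$.
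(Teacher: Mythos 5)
The paper states this theorem without giving a proof, so the proposal can only be assessed on its own merits. Your uniqueness argument and your supremum construction of the coefficient $r$ are both sound and standard: strict $<^\infty$-monotonicity of the $\Nomega$-map forces $y_1 \thickapprox y_2$ and hence, since both are uSurreal, $y_1 = y_2$; and once the leader exponent is fixed, the Archimedean property lifted through $\Sur_\mathbb{R}$ and $\Sur_\mathbb{Z}$ pins down $r$ uniquely and nonzero.

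The gap is exactly where you flag it, and it is more than a technicality. Your plan is to form the cut from $L = \{z : \Nomega^z <^\infty |x|\}$ and $R = \{z : |x| <^\infty \Nomega^z\}$, truncate both sides to $\Day{\delta}$, and then argue by trichotomy that $\{L_\delta \mid R_\delta\}$ is commensurate with $x$. But that trichotomy argument needs $L_\delta$ cofinal and $R_\delta$ coinitial (up to $\thickapprox$) in the full proper-class cut; otherwise, with $L_\delta$ too small, the surreal $\{L_\delta \mid R_\delta\}$ can land strictly inside $L$ with no contradiction, since the elements of $L$ that would witness the problem live above $\Day{\delta}$. Exhibiting such a $\delta$ amounts to already knowing a birthday bound on the leader exponent, and that bound is itself a consequence of the existence you are trying to prove --- a circularity. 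The way to break it is to prove a lemma of the form ``$|x|$ is commensurate with some $\Nomega^y$ with $\born{y} \leq \born{x}$'' by transfinite induction on $\born{x}$, building the cut for $y$ out of the inductively-available leader exponents of the nonzero options of $x$; this is precisely the style of birthday induction that the paper's $\Uniq{}$ and uSurreal machinery from Section~5 is set up to support. As written, the existence half of your proposal does not yet close.
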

Notice that when $x_1, x_2\not \thickapprox 0_\No$ have the same leader $\Nomega^y$ then $|x_1|$, $|x_2|$ are commensurate.
Additionally $\Nomega^y$ is the leader for $x_1\cdot\Sur_\mathbb{R}(r)$, $x_1+x_2$ for arbitrary $r\in\mathbb{R}\setminus\{0\}$.

\newcommand{\rseq}[0]{\texttt{r}}
\newcommand{\yseq}[0]{\texttt{y}}
\newcommand{\sseq}[0]{\texttt{s}}
\newcommand{\pseq}[0]{\texttt{p}}

Fix $x\not \thickapprox 0_\No$.
We can {\it approximate} $x$ using powers of
$\Nomega$, i.e., $x = \Sur_\mathbb{R}(r_0)\cdot \Nomega^{y_0}+x_1$ and $x_1 <^\infty \Nomega^{y_0}$.
Further, if $x_1\not \thickapprox 0_\No$, we obtain a better approximation by a sum:
$x = \Sur_\mathbb{R}(r_0)\cdot \Nomega^{y_0}+\Sur_\mathbb{R}(r_1)\cdot \Nomega^{y_1}+x_2$, where $x_2<^\infty \Nomega^{y_1} <^\infty \Nomega^{y_0}$
and so forth. As a consequence, $x$ can be represented in a form (already close to Cantor Normal Form):
\begin{equation}
x =
\Sur_\mathbb{R}(r_0)\cdot \Nomega^{y_0}+
\Sur_\mathbb{R}(r_1)\cdot \Nomega^{y_1}+
\Sur_\mathbb{R}(r_2)\cdot \Nomega^{y_2}+
\ldots
\Sur_\mathbb{R}(r_{k-1})\cdot \Nomega^{y_{k-1}}+x_k.
\end{equation}
Unfortunately, a finite number of iterations does not guarantee $x_k \thickapprox 0_\No$. As such, an infinite sum will
be necessary (infinite in the ordinal sense, so not just $\omega$).
To
state the CNF theorem, we must define this sum formally.
Conway assumes its existence,
but the formal proof of its convergence is actually more involved than the CNF proof. We adapt Erlich's approach \cite{Ehrlich}:
\begin{definition}[$\theta$-term]
Let $x$ be a surreal, $\alpha$, $\alpha^\prime$ be ordinals.
Let $\rseq\coloneqq\{\rseq_\beta\}_{\beta\leq \alpha}$ be a sequence of 
non-zero reals,
$\yseq\coloneqq\{\yseq_\beta\}_{\beta\leq \alpha}$ be a decreasing sequence of surreals, and
 $\sseq\coloneqq\{\sseq_\beta\}_{\beta\leq \alpha^\prime}$ be a sequence of surreals where $\alpha \leq \alpha^\prime$
 (i.e., $\sseq$ can be longer than $\rseq$, $\yseq$).
 We call $x$ the $(\theta,\sseq,\yseq,\rseq)$-term if
 $| x - (\sseq_\theta + \Sur_\mathbb{R}(\rseq_\theta)\cdot \Nomega^{\yseq_\theta} ) | <^\infty \Nomega^{\yseq_\theta}$
 where $\theta\leq \alpha$.
 Additionally, we write $x\in \bigcap_{\theta,\sseq,\yseq,\rseq}$ if $\theta\leq \alpha$ and $x$ is
 $(\gamma,\sseq,\yseq,\rseq)$-term for arbitrary $\gamma < \theta$.
\end{definition}
Note, that for any $a\leq b\leq c$, if $a \in \bigcap_{\theta,\sseq,\yseq,\rseq}$ and
$c \in \bigcap_{\theta,\sseq,\yseq,\rseq}$
then $b \in \bigcap_{\theta,\sseq,\yseq,\rseq}$.
Conway calls classes with this property {\it convex}.

Now we can formally express Conway's sentence
{\it the simplest number whose $\beta$-term is $\rseq_\beta\cdot\Nomega^{\yseq_\beta}$} in Erlich's way \cite{Ehrlich}.
We say that a triple $\sseq,\yseq,\rseq$ is simplest on $\beta$ if
\begin{itemize}
\item $\sseq_\beta=0_\No$ for $\beta=0$,
\item if $0 <\beta$ holds:
  $\sseq_\beta$ is \miz{uSurreal}, $\sseq_\beta \in\bigcap_{\beta, \sseq,\yseq,\rseq}$
   and for every \miz{uSurreal} $a\not=\sseq_\beta$, if $b \in \bigcap_{\beta,\sseq,\yseq,\rseq}$ then $\born{\sseq_{\beta}}<\born{a}$.
\end{itemize}
The expression $\bigcap_{\beta, \sseq,\yseq,\rseq}$ depends on all $\sseq_\gamma$ for
$\gamma\!<\!\beta$, so we can use it to specify $\sseq_\beta$.

Let $\theta$ be an ordinal. Consider, as previously, $\{\rseq_\beta\}_{\beta<\theta}$,
$\{\yseq_\beta\}_{\beta<\theta}$
and let $\{\sseq_\beta\}_{\beta\leq\theta}$ be a sequence of \miz{uSurreal}
where additionally the triple $\sseq,\yseq,\rseq$ is simplest on $\beta$
for $\beta\leq \theta$.
Then $\sum_{\beta<\theta} \Nomega^{\yseq_\beta} \cdot \rseq_\beta$ is defined to be $\sseq_\theta$ \cite{Ehrlich,Conway}.
As usual, to construct a suitable $\theta$-long sequence $\sseq$
we apply transfinite induction, first showing the existence of a suitable $\beta$-long sequence
for $\beta\leq \theta$.
Indeed,
using the $\beta$-step assumption
we can construct a suitable sequence
$\pseq:=\{p_\gamma\}_{\gamma < \beta}$ and extend it by the
assignment $p_\beta = \Uniq{e}$ for some $e\in\bigcap_{\beta,\pseq,\yseq,\rseq}$.  The existence of such $e$ is the key problem:
If $\beta$ is a limit ordinal, i.e., the sequence $\{p_\gamma\}_{\gamma < \beta}$
does not have a last element\footnote{We focus on the case of $\beta$ being the limit ordinal, where $\gamma < \beta\: \Longleftrightarrow\: 1\!+\!\gamma <\beta$.}.
 Conway introduced it highly informally \cite{Conway}.
Erlich \cite{Ehrlich} does it formally
using an approach where the class of every $e$ where $e \in\bigcap_{\beta,\pseq,\yseq,\rseq}$
corresponds to a non-empty intersection of the
descending transfinite sequence of convex subclasses of $\No$.
This assumes a stronger foundation and is not possible in Mizar (nor Isabelle/ZF or Metamath).
Indeed, his surreal numbers with lexicographic order are full, equivalently complete or equivalently
every nested sequence  $\{I_\gamma\}_{\gamma<\beta}$ of non-empty convex subclasses has a non-empty intersection
(see Theorem 4 in \cite{Ehrlich}). We cannot do this
for $\gamma<\beta$ when $\beta$ is a limit ordinal.
To solve this in standard set theory,
we defined two
somewhat complicated
sequences $\{l_\gamma\}_{\gamma<\beta}$, $\{u_\gamma\}_{\gamma<\beta}$, defined by,
$l_\gamma := p_{1\!+\!\gamma} +(\Sur_\mathbb{R}(\rseq_{1\!+\!\gamma})-1_\No)\cdot \Nomega^{\yseq_{1\!+\!\gamma}}$,
$u_\gamma := p_{1+\gamma} +(\Sur_\mathbb{R}(\rseq_{1+\gamma})+1_\No)\cdot \Nomega^{\yseq_{1\!+\!\gamma}}$ for $\gamma<\beta$
that in contrast to $\pseq$
are monotonous
(increasing and decreasing, respectively)
and $\{\bigcup_{\gamma<\beta}\{l_\beta\}\mid \bigcup_{\gamma<\beta}\{u_\beta\}\}$ is a member of the intersection.

With these, the formalization of the following theorem is attainable.
It says that the approximation of $x$ in $\Nomega$ way can be performed at most
$\born{x}$ times, since their $\mathfrak{b}$-s of subsequent partial approximating sums
give an increasing sequence bounded by $\born{x}$.

\begin{theorem}[Conway's Normal Form Theorem, Mizar ID: SURREALC:100,102]\label{CNFT}
For every surreal $x$ there exists a unique
$\{\rseq_\beta\}_{\beta < \theta}$ sequence of non-zero real,
$\{\yseq_\beta\}_{\beta< \theta}$ decreasing sequence of \miz{uSurreal} such that
  $x\thickapprox \sum_{\beta<\theta} \Nomega^{\yseq_\beta} \cdot \Sur_\mathbb{R}(\rseq_\beta)$.
Moreover $\mathfrak{b}$ of the sum  $\le$ $\born{x}$.
\end{theorem}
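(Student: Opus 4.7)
The plan is to construct $(\rseq, \yseq)$ by transfinite recursion using the leader theorem, with termination enforced by a birthday argument. Start with $\sseq_0 = 0_\No$; at a successor stage $\beta+1$, if $\sseq_\beta \thickapprox x$ stop with $\theta = \beta$, otherwise apply the leader theorem to $x - \sseq_\beta \not\thickapprox 0_\No$ to obtain the unique \texttt{uSurreal} $\yseq_\beta$ and nonzero real $\rseq_\beta$ satisfying $|(x - \sseq_\beta) - \Sur_\mathbb{R}(\rseq_\beta)\cdot \Nomega^{\yseq_\beta}| <^\infty \Nomega^{\yseq_\beta}$. Then $x$ witnesses that $\bigcap_{\beta+1, \sseq, \yseq, \rseq}$ is non-empty, and I define $\sseq_{\beta+1}$ as the youngest \texttt{uSurreal} in that class.

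At a limit stage $\beta$, the intersection $\bigcap_{\beta, \sseq, \yseq, \rseq}$ is again non-empty: the monotone auxiliary sequences $\{l_\gamma\}_{\gamma < \beta}$ and $\{u_\gamma\}_{\gamma < \beta}$ introduced before the theorem exhibit $\{\bigcup_{\gamma<\beta}\{l_\gamma\} \mid \bigcup_{\gamma<\beta}\{u_\gamma\}\}$ as a member, and I take $\sseq_\beta$ to be the youngest \texttt{uSurreal} in the class. Decreasingness of $\yseq$ is immediate: the successor construction ensures $|x - \sseq_{\beta+1}| <^\infty \Nomega^{\yseq_\beta}$, so the next leader exponent $\yseq_{\beta+1}$ is strictly smaller than $\yseq_\beta$.

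The main obstacle is termination together with the birthday bound. I would prove by transfinite induction that as long as the recursion has not halted, the birthdays $\{\born{\sseq_\beta}\}$ form a strictly increasing ordinal sequence bounded above by $\born{x}$. The upper bound comes from the simplest-\texttt{uSurreal} rule: $\Uniq{x}$ is always a candidate in each intersection (since the class is $\thickapprox$-closed and $x$ lies in it), hence $\born{\sseq_\beta} \leq \born{\Uniq{x}} \leq \born{x}$. Strict increase uses the tree-theoretic picture of \texttt{uSurreal}: $\sseq_\beta$ cannot equal $\sseq_{\beta+1}$ (otherwise $\Sur_\mathbb{R}(\rseq_\beta)\cdot\Nomega^{\yseq_\beta}$ would itself be infinitely less than $\Nomega^{\yseq_\beta}$, contradicting commensurability with a nonzero real coefficient), and the added constraint forces the new youngest element to have a strictly longer tree-theoretic representation than $\sseq_\beta$. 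Since strictly increasing ordinal sequences into $\born{x}$ have length at most $\born{x}$, the recursion stops at some $\theta \leq \born{x}$ with $\sseq_\theta \thickapprox x$, simultaneously giving the bound $\mathfrak{b}(\sum) = \born{\sseq_\theta} \leq \born{x}$.

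For uniqueness, suppose two decompositions $(\rseq, \yseq, \theta)$ and $(\rseq', \yseq', \theta')$ both represent $x$. A parallel transfinite induction on $\beta$ forces $(\rseq_\beta, \yseq_\beta) = (\rseq'_\beta, \yseq'_\beta)$ at successor stages by the leader uniqueness theorem applied to $x - \sseq_\beta$, and hence $\sseq_{\beta+1} = \sseq'_{\beta+1}$ by the simplest-on-$\beta$ rule; limit stages are settled by the uniqueness of the youngest \texttt{uSurreal} in the fully determined class. Termination of both sequences thus occurs at the same least $\theta$ with $\sseq_\theta \thickapprox x$, giving $\theta = \theta'$ and completing the proof.
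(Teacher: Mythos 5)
Your proposal follows the same strategy as the paper's (very terse) proof sketch: build the partial sums $\sseq_\beta$ by transfinite recursion using the leader theorem at successors and the auxiliary monotone sequences $\{l_\gamma\}$, $\{u_\gamma\}$ at limits, and enforce termination by observing that the birthdays $\born{\sseq_\beta}$ form a strictly increasing sequence bounded above by $\born{x}$. This is precisely what the paper states (``their $\mathfrak{b}$-s of subsequent partial approximating sums give an increasing sequence bounded by $\born{x}$''), and your fleshing out of that sketch is correct: the upper bound comes from $\Uniq{x}$ always lying in the $\thickapprox$-closed convex class $\bigcap_{\beta,\sseq,\yseq,\rseq}$ so that the simplest element $\sseq_\beta$ satisfies $\born{\sseq_\beta}\leq\born{\Uniq{x}}\leq\born{x}$, and strict increase follows because $\sseq_\beta\notin\bigcap_{\beta+1,\sseq,\yseq,\rseq}$ (the $\beta$-term condition would force $|\Sur_\mathbb{R}(\rseq_\beta)\cdot\Nomega^{\yseq_\beta}|<^\infty\Nomega^{\yseq_\beta}$, contradicting $\rseq_\beta\neq 0$), while $\bigcap_{\beta+1}\subseteq\bigcap_\beta$ and the strict ``simplest on $\beta$'' minimality make any element of the smaller class strictly older than $\sseq_\beta$. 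The uniqueness argument via parallel induction and leader uniqueness is likewise the expected route. In short, this is the paper's proof with the missing details filled in correctly.
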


Conway's Normal Form allows characterizing any $x$ using a sum of two transfinite sequences $\rseq$, $\yseq$.
Rather than a regular sum, it is interpreted more as a Hahn-Mal'cev-Neumann infinite series sum.
This characterization is key to the further formalization of Conway \cite{Conway}, in particular it will allow
constructing the $n^{th}$-root of $x$, showing that odd-degree polynomials have roots, characterizing
omnific surreal integers and further results as discussed in the conclusion, Section~\ref{s:concl}.

\section{Related Work}
There are several formalization pertinent to surreal numbers in various systems.
Mamane's formalization in Coq \cite{Mamane04}, Obua's in Isabelle/HOLZF \cite{holzf}, and
Carneiro, Morrison, and Nakade's\footnote{\url{https://github.com/leanprover-community/mathlib/blob/master/src/set_theory/surreal/}} in Lean
follow an approach closer to Conway. All three avoid induction-recursion by starting with games.
Mamane motivates his work as a ``stress-test'' of Coq in the formalization of a very set-theoretic definition.
He proved that surreal numbers form a commutative ring (without associativity),
and without {\it permuting induction}, that was only formulated in the article.
Without this induction, the formalization needs to cover $2^n$
cases corresponding to the edges of an $n$-dimensional cube. Our work deal with this
using $\OProd{}$, $\CProd{}$ inductively to cover the cartesian product.

Obua's work focues on the development of the infrastructure for surreal numbers.
The formalization only reachers the fact that surreals form an additive group.
Similarly, the Lean formalization defines surreal numbers with addition and show that they form a commutative group.
It also includes an embedding of ordinals into surreal, a manually defined halving operator and
an embedding of dyadic numbers into surreal.

Induction-recursion, as studied in intuitionistic type theory \cite{Dybjer00}, could allow a more direct definition of the
surreal numbers. However, we are not aware of any formalizations of the surreal numbers that make use of induction-recursion.

Nittka followed the tree-theoretic approach in Mizar \cite{CGAMES_1.ABS}. After showing the involutiveness of minus,
further definitions and properties became too involved in this approach in Mizar and made us abandon this method.
With the \miz{uSurreal} obtained in our formalization, it is possible to easier continue with that approach.

The largest formalization of surreal numbers focusing solely on the
tree-theoretic approach has been developed in the Megalodon proof
system\footnote{http://grid01.ciirc.cvut.cz/~chad/100thms/100thms.html}.
Without the use of permuting induction, it shows that surreals form a
field and defines the square root.  Megalodon stands out as the only
system where integers, reals and ordinals used throughout the system
are carved out of the surreal numbers, rather than being added on
top. In comparison to our work, the Megalodon formalization lacks the
Conway Normal Form theorem and the theorems and definitions leading up
to it. Additionally, we formalized morphisms between the MML numbers
and Conway numbers, enabling transfer of theorems between them. The
Megalodon formalization of the inverse and square root operations has
been completed based on the ones done in our Mizar formalizations,
demonstrating the adaptability of our approach to other systems.

\section{Conclusion}\label{s:concl}
We formalized a large number of properties of surreal numbers in the
Mizar proof assistant system.  We initially focused on Conway's
approach to introduce the concept, which simplifies the definitions of
arithmetic operations, and then showed the equivalence of our approach
to the tree-theoretic approach. For this, we built a bridge that
allows joining the proofs in both approaches (\miz{uSurreal}) and
using it, we were able to reach Conway's Normal Form.  Due to the
relatively weak foundations of Mizar (first, there is no
induction-recursion; second, reasoning must be explicitly conducted on
sets rather than classes), we believe that our approach can be useful
for other formal systems.

CNF is crucial for further formalization of Conway's results
\cite{Conway}. Future work includes a formalization of $n^{th}$-root
of $x$. We are considering two approaches. First, to combine the use
of Kruskal-Gonshor exponential function with
logarithms. Alternatively, a more direct use of CNF, following
\cite{Conway}, is possible. With the $n^{th}$-root of $x$, one can
show that $\No$ is algebraically real-closed, i.e., that odd-degree
polynomials have roots. CNF is also needed to characterize omnific
surreal integers (i.e., surreals that satisfy
$x\thickapprox\{x-1\mid x+1\}$). It is then possible to show that
every surreal number can be represented as the quotient of two omnific
integers. This can lead to the formalization of surcomplexes.  Finally,
CNF is fundamental for further works on s-hierarchical ordered field
$\No$ \cite{Ehrlich2012,Ehrlich}.

\input{m24_fin.bblx}


\end{document}